\documentclass[11pt,reqno]{amsart}
\usepackage{amsmath,amsthm,amssymb,bm,dsfont,braket,esint,nicefrac}
\usepackage[mathscr]{eucal}
\usepackage{amsaddr}
\usepackage{upgreek}
\usepackage[left=2cm,right=2cm,top=2cm,bottom=2cm]{geometry}
\usepackage{mathtools}\mathtoolsset{centercolon}
\usepackage[nospace,noadjust]{cite}

\usepackage{pifont} % checkmark
\usepackage[shortlabels]{enumitem}
\usepackage{setspace}
\usepackage{graphicx}
\usepackage{verbatim}
\usepackage{float}
\usepackage{placeins}
\usepackage{array}
\usepackage{booktabs}
\usepackage{multicol,multirow,tabularx}
\usepackage{threeparttable}
\usepackage[update,prepend]{epstopdf}
\usepackage{amsfonts,amssymb,dsfont,xfrac,comment}
\usepackage[abs]{overpic}
\makeatletter
\def\adl@drawiv#1#2#3{%
	\hskip0
	\tabcolsep
	\xleaders#3{#2 0\@tempdimb #1{1}#2 0.5\@tempdimb}%
	#2\z@ plus1fil minus1fil\relax
	\hskip0\tabcolsep}
\newcommand{\cdashlinelr}[1]{%
	\noalign{\vskip\aboverulesep
		\global\let\@dashdrawstore\adl@draw
		\global\let\adl@draw\adl@drawiv}
	\cdashline{#1}
	\noalign{\global\let\adl@draw\@dashdrawstore
		\vskip\belowrulesep}}
\makeatother

\usepackage[dvipsnames]{xcolor}
\usepackage{colortbl}
\usepackage{arydshln}
\usepackage[linktocpage=true,pagebackref]{hyperref}
\hypersetup{
	unicode=false,          % non-Latin characters in Acrobat’s bookmarks
	pdftoolbar=true,        % show Acrobat’s toolbar?
	pdfmenubar=true,        % show Acrobat’s menu?
	pdffitwindow=false,     % window fit to page when opened
	pdfstartview={FitH},    % fits the width of the page to the window
	pdfnewwindow=true,      % links in new PDF window
	colorlinks=true,        % false: boxed links; true: colored links
	linkcolor=Red,          % color of internal links (change box color with linkbordercolor)
	citecolor=ForestGreen,  % color of links to bibliography
	filecolor=Magenta,      % color of file links
	urlcolor=BlueViolet,    % color of external links
}
\renewcommand*{\backref}[1]{}
\renewcommand*{\backrefalt}[4]{%
    \ifcase #1%
          \or [Cited on page~#2.]%
          \else [Cited on pages~#2.]%
    \fi%
    }

\usepackage{doi}
\usepackage{url}
\usepackage{caption, subcaption}
\usepackage{enumitem}
\usepackage{shadethm}
\usepackage{tikz,pgf}

\usepackage{cleveref}
\makeatletter
\ifx\@NODS\undefined%

\let\mathbb=\mathds
\else%
\fi
\makeatother

\DeclareMathOperator*{\argmax}{arg\,max}
\DeclareMathOperator{\Tr}{Tr}

\DeclareMathOperator{\I}{\mathbf{1}}
\DeclareMathOperator{\supp}{supp}

\def\M{\mathsf{M}}

\def\A{\mathsf{A}}
\def\B{\mathsf{B}}

\def\R{\mathsf{R}}

\DeclareMathOperator{\Exp}{\mathbb{E}}

\newcommand{\be}{{\mathbf e}}

\def\0{{\mathbf{0}}}
\def\1{{\mathbf{1}}}
\def\2{{\mathbf{2}}}
\def\3{{\mathbf{3}}}
\def\4{{\mathbf{4}}}
\def\5{{\mathbf{5}}}
\def\6{{\mathbf{6}}}

\def\7{{\mathbf{7}}}
\def\8{{\mathbf{8}}}
\def\9{{\mathbf{9}}}

\def\be{\begin{equation}}
\def\ee{\end{equation}}
\def\bea{\begin{eqnarray}}
\def\eea{\end{eqnarray}}
\def\eps{\varepsilon}

\newcommand{\id}{\operatorname{id}}

\theoremstyle{plain}
\newshadetheorem{theo}{Theorem}[section]
\newshadetheorem{prop}[theo]{Proposition}%[section]
\newshadetheorem{lemm}[theo]{Lemma}%[section]
\theoremstyle{definition}
\theoremstyle{remark}
\newtheorem{remark}[theo]{Remark}%[section]

\makeatletter
\newcommand{\opnorm}{\@ifstar\@opnorms\@opnorm}
\newcommand{\@opnorms}[1]{%
	$\left|\mkern-1.5mu\left|\mkern-1.5mu\left|
	#1
	\right|\mkern-1.5mu\right|\mkern-1.5mu\right|$
}
\newcommand{\@opnorm}[2][]{%
	\mathopen{#1|\mkern-1.5mu#1|\mkern-1.5mu#1|}
	#2
	\mathclose{#1|\mkern-1.5mu#1|\mkern-1.5mu#1|}
}
\makeatother

\usepackage{tikz}
\usetikzlibrary{arrows.meta} % to control arrow size
\tikzset{>={Latex[length=4,width=4]}} % for LaTeX arrow head
\usetikzlibrary{calc}

\colorlet{mylightblue}{blue!5!white}
\colorlet{mydarkblue}{blue!30!black}
\colorlet{myblue}{blue!50!black}
\colorlet{myred}{red!50!black}
\colorlet{mydarkred}{red!30!black}
\colorlet{mydarkgreen}{green!30!black}

\newcommand{\sh}{\kern-0.08em$^\textbf{\#}$\hspace{-3pt}}
\renewcommand{\b}{\kern-0.06em$\flat$}

\newcommand{\Env}{\mathsf E}
\newcommand{\Mh}{{\widehat{\mathsf M}}}
\newcommand{\N}{\mathscr N}
\newcommand{\D}{\mathscr D}
\renewcommand{\eps}{\varepsilon_{\rm EG}}
\newcommand{\avgeps}{\overline{\varepsilon}_{\rm EG}}
\newenvironment{theorem}{\begin{theo}}{\end{theo}}
\theoremstyle{plain}

\allowdisplaybreaks[1]
\hypersetup{pdftitle={Quantum communication: a discrimination approach},pdfauthor={Hao-Chung Cheng}}

\begin{document}

%%%%% How to disable amsart.cls to capitalize the article title?
\let\origmaketitle\maketitle
\def\maketitle{
	\begingroup
	\def\uppercasenonmath##1{} % this disables uppercasing title
	\let\MakeUppercase\relax % this disables uppercasing authors
	\origmaketitle
	\endgroup
}
%%%%%%%%%%%%%%%%%%%%%%%%%%%%%%%%

\title{\bfseries \Large{ 
Quantum communication: a discrimination approach
}}

\author{ \normalsize 
{Hao-Chung Cheng}$^{1\text{--}5}$
}
\address{\small  	
$^1$Department of Electrical Engineering and Graduate Institute of Communication Engineering,\\ National Taiwan University, Taipei 106, Taiwan (R.O.C.)\\
$^2$Department of Mathematics, National Taiwan University\\
$^3$Center for Quantum Science and Engineering, National Taiwan University\\
$^4$Hon Hai (Foxconn) Quantum Computing Center, New Taipei City 236, Taiwan (R.O.C.)\\
$^5$Physics and Mathematics Divisions, National Center for Theoretical Sciences, Taipei 10617, Taiwan (R.O.C.)\\
}

\email{\href{mailto:haochung.ch@gmail.com}{haochung.ch@gmail.com}}

\date{\today}

\begin{abstract}
We develop a state discrimination approach to unassisted quantum communication through a noisy quantum channel.
Starting from the operational duality between entanglement recovery and state discrimination, we construct entanglement-generation codes with explicit decoders constructed from Petz recovery maps. The resulting error bounds hold for every number of channel uses and are expressed in terms of Petz–Rényi coherent information. For product sources, they yield exponential error decay at rates below coherent information.
%We also derive binary-testing bounds and extend the results to entanglement transmission.
\end{abstract}

\maketitle
%\vspace{-2.8em}
%\tableofcontents

%\newpage

\section{Introduction}

Quantum communication aims to transfer a quantum system from Alice to Bob while preserving entanglement with a reference system. Communication is reliable if coding over increasingly many uses of a noisy channel allows Bob to recover the original joint state of the transmitted system and the reference with arbitrarily high accuracy. The quantum capacity is the highest asymptotic transmission rate, in qubits per channel use, at which this is possible \cite{Schumacher96, BarnumKnillNielsen00, Lloyd,ShorCapacity,Devetak}.
An equivalent asymptotic task is \emph{entanglement generation} \cite{Devetak}, in which Alice and Bob use the channel to establish maximally entangled states. This task highlights two complementary perspectives on quantum coding: Bob must be able to recover entanglement with the system retained by Alice, while that system must be nearly uncorrelated with the environment.
In this paper, we study the operational connection between entanglement recovery, state discrimination, and decoupling from the environment. We use this connection to develop a one-shot state discrimination approach to entanglement generation and derive explicit coding bounds.

The quantum capacity theorem, commonly called the Lloyd--Shor--Devetak (LSD) theorem, was established through the work of Lloyd, Shor, and Devetak \cite{Lloyd,ShorCapacity,Devetak}. 
Its central information quantity is the coherent information \cite{SNCoherent}.
For a channel $\N_{\A\to\B}$, a pure reference-input state $\phi_{\R\A}$, and the channel output $\rho_{\R\B}=(\id_\R\otimes\N)(\phi_{\R\A})$, this quantity is
\[
 I(\R\rangle\B)_\rho \coloneqq H(\B)_\rho-H(\R\B)_\rho,
\]
where $H(\B)_\rho \coloneqq - \Tr[\rho_{\B} \log \rho_{\B}]$ is the von Neumann entropy. For a pure channel output state $\rho_{\R\B\Env}$ including the environment, $I(\R\rangle\B)_\rho = H(\B)_\rho-H(\Env)_\rho$. Every rate below this coherent information is asymptotically achievable. Optimizing over inputs to arbitrarily large blocks gives the capacity formula
\begin{equation}
 Q(\N)=\lim_{n\to\infty}\frac1n
 \max_{\phi_{\R\A^n}}
 I(\R\rangle\B^n)_{(\id_{\R}\otimes\N^{\otimes n})(\phi)}.
 \label{eq:intro-capacity}
\end{equation}
The maximization is over pure reference-input states, with a reference that may be entangled with Alice's whole input system $\A^n$; namely, the regularization permits correlations across channel uses.

The proofs of the LSD theorem reveal different mechanisms behind this formula.
Lloyd's construction is based on an asymptotic \emph{packing} argument at the receiver: typical output subspaces associated with different codewords become approximately orthogonal \cite{Lloyd}. For $n$ channel uses, the dimension-counting intuition is
\[
 \underbrace{d}_{\text{codewords}} \cdot
 \;\underbrace{2^{nH(\Env)_{\rho}}}_{\text{output subspace per codeword}}
 \ \lesssim\
 \underbrace{2^{nH(\B)_{\rho}}}_{\text{available output space}},
\]
which suggests rates $\frac1n\log_2d<H(\B)_{\rho}-H(\Env)_{\rho}$. Shor developed a related packing argument using a pretty-good measurement \cite{Belavkin75, HausladenWootters94} to distinguish output vectors indexed by a codeword and a typical Kraus error caused by $\mathcal{N}$, and constructs a coherent recovery operation from the measurement vectors \cite{ShorCapacity}.
A later proof by Hayden, Shor, and Winter combined distinguishability in two Fourier-conjugate bases with an information-uncertainty relation \cite{ChristandlWinter} to establish decoupling and hence the existence of a quantum decoder~\cite{HSWGaussian}.

Devetak's proof proceeds through private classical communication \cite{Devetak}. A packing argument enables the receiver to distinguish a message together with a randomization label chosen locally by Alice. 
A \emph{covering} argument, based on the operator concentration bound of Ahlswede and Winter \cite{AhlswedeWinter}, ensures that averaging over this label makes the environmental output nearly independent of the message.
A coherent implementation of the private coding construction then yields entanglement generation. Thus, receiver distinguishability and environmental independence already play complementary roles in the capacity theorem.

Hayden, Horodecki, Winter, and Yard developed a decoupling proof of the quantum coding theorem \cite{HHWY}, building on earlier work on approximate quantum error correction and quantum state merging \cite{SchumacherWestmoreland02,HorodeckiOppenheimWinter}. Their construction uses a unitarily covariant subspace measurement to generate random codes and shows that the retained reference is approximately maximally mixed and uncorrelated with the environment. Uhlmann's theorem then guarantees the existence of a decoding operation on Bob's system that produces a nearly maximally entangled state shared with the retained reference \cite{Uhlmann}. Thus, a decoupling estimate suffices to guarantee reliable entanglement generation without explicitly constructing the decoder. Together with related random-code analyses such as Klesse's \cite{Klesse}, this established decoupling as a direct approach to quantum coding.

Decoupling subsequently became a general one-shot framework for quantum information processing \cite{DupuisThesis,DBWR,SharpCovering,BCY}, with further developments in quantitative bounds and error exponents \cite{LiYaoDecoupling,CDG}.
Related covering and resolvability methods have been developed for quantum soft covering \cite{CGSoftCovering,SGCSecondOrder,APWCovering}, privacy amplification \cite{RennerThesis,HayashiPA2011,HayashiPA2013,HayashiPA2014,HayashiPA2015,DupuisNoSmoothing,RegulaTomamichel}, and private communication \cite{HayashiResolvability2006,HayashiWiretap2015}.
Convex-split methods \cite{ADJConvexSplit,CGConvexSplit,SharpCovering} have also found applications in entanglement-assisted communication \cite{AJWEA,AnshuJain} and quantum channel simulation \cite{LiYaoSimulation,BCGBroadcast}.
Recently, Berta, Cheng, and Yao \cite{BCY} developed relative-entropy decoupling and derived achievable error exponents for quantum communication in purified distance, expressed in terms of Petz--R\'enyi coherent information (see also \cite{Yamasaki} for decoupling beyond finite dimensions).
These approaches share a common principle: suitable randomization makes the relevant state approximate a prescribed target state, paralleling classical soft covering and channel resolvability, where a mixture of codeword output distributions approximates a target distribution \cite{HanVerdu1993,HayashiResolvability2006,CuffChannelSynthesis,Hayashi2017}.
In privacy amplification, convex splitting, and decoupling tasks, the target is a product state, so this approximation establishes that the protected system is nearly independent of the side information.

A second line of work analyzes an explicit decoder and brings quantum coding closer to state discrimination. Beigi, Datta, and Leditzky \cite{BDL} extended the pretty-good-measurement analysis of classical-quantum (c-q) coding to entanglement transmission, replacing the measurement by a Petz recovery map and decoding success by fidelity \cite{BeigiGohari}.
Their method yields one-shot and second-order coding bounds. 
Burri \cite{Burri} gave explicit fidelity expressions for Petz and rotated Petz decoders, including a characterization through the complementary channel. More recently, Can and Wilde \cite{CW} used a trace inequality from the present author's paper \cite{ChengSimple} to control the infidelity of a Petz decoder by a binary-testing cost, obtaining one-shot bounds for entanglement generation and one-way entanglement distillation.
These results motivate studying quantum communication directly through the discrimination methods that govern classical-message decoding.

To explain why the decoupling and discrimination approaches both work, consider a pure preparation $\psi_{\M\A}$ at the encoder and a pure output $\theta_{\M\B\Env}$ obtained by applying a Stinespring isometry of the channel. Alice retains the quantum system $\M$, Bob receives $\B$, and the target is a maximally entangled state $\Phi_{\M\Mh}$ of Schmidt rank $d=|\M|$, containing $\log_2d$ ebits.
Let $\{W_z\}_{z=1}^{d^2}$ be the Weyl unitaries on $\M$. The known operational dualities collected in Theorem~\ref{thm:duality} relate the following three optimal errors:\footnote{Equation \eqref{eq:intro-duality} can be written in terms of the success probability and squared fidelity. We choose this representation since this paper focuses on error analysis.}
\begin{equation}
\begin{aligned}
 \min_{\sigma_\Env}
 P^2\!\left(\theta_{\M\Env},\frac{I_\M}{d}\otimes\sigma_\Env\right)
 &\overset{\text{\cite{KRS}}}{=}
 \min_{\D_{\B\to\Mh}}
 P^2\!\left((\id_\M\otimes\D)(\theta_{\M\B}),\Phi_{\M\Mh}\right)
 \\
 &\overset{\text{\cite{Jencova}}}{=}
 P_{\rm e}^{\star}\!\left(
 \left\{\frac1{d^2},
 (W_z\otimes I_\B)\theta_{\M\B}(W_z^\dagger\otimes I_\B)
 \right\}_{z=1}^{d^2}\right).
\end{aligned}
\label{eq:intro-duality}
\end{equation}
Here $P^2=1-F^2$ is squared purified distance, $F$ is root fidelity, and $P_{\rm e}^{\star}$ is the minimum average discrimination error of the quantum ensemble.
The minimum over $\sigma_\Env$ is over density operators, and the minimum over $\D_{\B\to\Mh}$ is over decoding channels.
The first equality follows from the operational characterizations of conditional min- and max-entropies by K\"onig, Renner, and Schaffner \cite{KRS}.
The second follows from Jen\v{c}ov\'a's characterization of the optimal recovery fidelity through discrimination of the Weyl-rotated ensemble \cite{Jencova}.
In the classical-quantum setting, Renes established related dualities between privacy amplification and data compression with quantum side information \cite{RenesPADuality}, and between discrimination on a channel and decoupling on its dual \cite{RenesChannelDuality}.
These ideas also yield error-exponent bounds for classical-quantum channels \cite{Renes2025}.
Connections also appear in the duality of teleportation and dense coding \cite{ChL}.
Section~\ref{sec:duality} gives a self-contained proof of the two equalities above, which we use as the starting point for our coding bounds.

Equation~\eqref{eq:intro-duality} gives a precise operational connection between the covering and packing perspectives, illustrated in Figure~\ref{fig:operational-duality}.
From the covering perspective, entanglement can be recovered accurately when $\M$ is nearly maximally mixed and decoupled from the environment, as illustrated in \cite{SchumacherWestmoreland02, HHWY}.
From the packing perspective, accurate recovery is equivalent to distinguishing the $d^2$ equiprobable Weyl-rotated states on $\M\B$ with small error.
The joint measurement on $\M\B$ serves as an auxiliary task in the analysis; the corresponding decoder acts only on Bob's system $\B$.
The two perspectives thus characterize the same optimal entanglement-recovery error.

\begin{figure}[tbp]
\centering
\begin{tikzpicture}[x=1cm,y=1cm,
  font=\small,
  system/.style={circle,minimum size=12mm,inner sep=0pt,line width=.9pt},
  flow/.style={-{Stealth[length=2.2mm,width=1.5mm]},line width=.8pt},
  panel/.style={rounded corners=3mm,line width=.6pt},
  every node/.style={align=center}]
 \definecolor{packingblue}{HTML}{355CA8}
 \definecolor{coveringorange}{HTML}{A85A20}
 \definecolor{dualityink}{HTML}{29343B}
 \definecolor{panelpaper}{HTML}{FCFCFB}
 \definecolor{recoverypaper}{HTML}{F5F5F3}

 % Dashed marginal frames: neither frame denotes a joint local laboratory.
 % Draw translucent fills first, then both outlines; their overlap contains M.
 \node[font=\normalsize,text=dualityink] at (0,1.95)
   {Pure channel output $\theta_{\M\B\Env}$};
 \path[rounded corners=3mm,fill=packingblue,fill opacity=.13]
   (-6.7,-.70) rectangle (.88,1.10);
 \path[rounded corners=3mm,fill=coveringorange,fill opacity=.13]
   (-.88,-1.10) rectangle (6.7,.70);
 \draw[panel,draw=packingblue!75,dash pattern=on 3pt off 2pt]
   (-6.7,-.70) rectangle (.88,1.10);
 \draw[panel,draw=coveringorange!75,dash pattern=on 3pt off 2pt]
   (-.88,-1.10) rectangle (6.7,.70);
 \node[text=packingblue,font=\small] at (-3.45,1.37) {$\theta_{\M\B}$};
 \node[text=coveringorange,font=\small] at (3.45,.97) {$\theta_{\M\Env}$};
 \node[system,draw=packingblue,fill=white,text=packingblue] at (-4.7,.20) {$\B$};
 \node[system,draw=dualityink,fill=none,text=dualityink,line width=1.2pt] at (0,0) {$\M$};
 \node[system,draw=coveringorange,fill=white,text=coveringorange] at (4.7,-.20) {$\Env$};
 \node[text=packingblue,font=\footnotesize] at (-2.6,.20) {Bob's output};
 \node[text=coveringorange,font=\footnotesize] at (2.6,-.20) {Environment};
 \node[inner sep=0pt,font=\footnotesize,text=dualityink] at (0,-1.45)
   {Alice retains $\M$; $|\M|=d$};

 \draw[flow,packingblue] (-3.45,-.74)--(-3.45,-2.05);
 \draw[flow,coveringorange] (3.45,-1.14)--(3.45,-2.05);

 \begin{scope}[yshift=-5mm]
 % Packing: touching and overlapping contours allow nonzero error.
 \draw[panel,draw=packingblue!60,fill=panelpaper]
   (-6.7,-1.65) rectangle (-.25,-6.45);
 \node[text=packingblue,font=\normalsize\bfseries] at (-3.475,-2.04) {Packing};
 \node[text=dualityink] at (-3.475,-2.48) {Discriminate $d^2$ Weyl rotations on $\M\B$};
 \draw[packingblue!55,dashed,line width=.6pt]
   (-3.475,-3.85) ellipse [x radius=2.82,y radius=1.18];
 \foreach \x/\y/\lab in {-5.2/-3.65/1,-3.65/-3.3/2,-1.9/-3.7/3,-4.65/-4.38/{\cdots},-2.8/-4.3/{d^2}} {
   \draw[draw=packingblue!80,fill=packingblue,fill opacity=.09,line width=.8pt]
     (\x,\y) circle [radius=.55];
   \node[text=packingblue,font=\footnotesize] at (\x,\y) {$\lab$};
 }
 \node[text=dualityink] at (-3.475,-5.37)
   {$d^2$ uniform $\{(W_z\otimes I_\B)\theta_{\M\B}(W_z^\dagger\otimes I_\B)\}_z$};
 \node[text=packingblue,font=\footnotesize] at (-3.475,-5.94)
   {A joint measurement on $\M\B$ estimates $z$};

 % Decoupling: the ME orbit states are close to a common product reference.
 \draw[panel,draw=coveringorange!50,fill=coveringorange!2]
   (.25,-1.65) rectangle (6.7,-6.45);
 \node[text=coveringorange,font=\normalsize\bfseries] at (3.475,-2.04)
   {Covering};
 \node[text=dualityink] at (3.475,-2.48) {Nearly coincident Weyl rotations on $\M\Env$};
 \begin{scope}[shift={(3.475,-3.82)}]
  \foreach \x/\y/\ang in {-.16/.03/8,.14/.04/-9,.02/.13/3,.06/-.12/-3,-.04/-.02/13} {
   \draw[draw=coveringorange!65,fill=coveringorange,fill opacity=.035,line width=.75pt,
       shift={(\x,\y)},rotate=\ang]
      (0,0) ellipse [x radius=1.35,y radius=.85];
  }
  \fill[dualityink] (0,0) circle [radius=1.6pt];
 \end{scope}
 \node[text=coveringorange,font=\footnotesize] at (1.47,-3.02) {$z=1$};
 \draw[coveringorange!60,line width=.5pt] (1.75,-3.15)--(2.55,-3.32);
 \node[text=coveringorange,font=\footnotesize] at (5.55,-3.1) {$z=2$};
 \draw[coveringorange!60,line width=.5pt] (5.25,-3.22)--(4.62,-3.42);
 \node[text=coveringorange,font=\footnotesize] at (5.6,-4.62) {$z=d^2$};
 \draw[coveringorange!60,line width=.5pt] (5.15,-4.54)--(4.6,-4.27);
 \draw[dualityink!55,line width=.5pt] (3.475,-3.86)--(3.475,-4.91);
 \node[fill=coveringorange!2,inner sep=1.5pt,text=dualityink,font=\footnotesize]
   at (3.475,-5.02) {common product state};
 \node[text=dualityink] at (3.475,-5.59)
   {$\displaystyle\theta_{\M\Env}\approx\frac{I_\M}{d}\otimes\sigma_\Env\quad\text{for some }\sigma_\Env$};
 \node[text=coveringorange,font=\footnotesize] at (3.475,-6.18)
   {Uniform $\M$; no correlations with $\Env$};

 % The only text between the panels and the recovery box is the equality label.
 \draw[flow,packingblue] (-3.475,-6.48)--(-3.475,-7.17);
 \draw[flow,coveringorange] (3.475,-6.48)--(3.475,-7.17);
 \node[text=dualityink,font=\small\bfseries] at (0,-6.86)
   {Same optimal error};
 \draw[panel,draw=dualityink!40,fill=recoverypaper]
   (-6.7,-7.25) rectangle (6.7,-8.58);
 \node[text=dualityink,font=\small\bfseries] at (0,-7.63)
   {Entanglement recovery: decode $\B$ to recover $\log_2 d$ ebits};
 \node[text=dualityink] at (0,-8.17)
   {$(\id_\M\otimes\D_{\B\to\Mh})(\theta_{\M\B})\approx\Phi_{\M\Mh}$};
 \end{scope}
\end{tikzpicture}
\caption{Operational duality for the pure channel output $\theta_{\M\B\Env}$, with $\M$ common to the two marginals.
The contours schematically represent Weyl-rotated states; $W_z$ acts only on $\M$. The packing contours may overlap, allowing nonzero discrimination error. On $\M\Env$, all Weyl-rotations should remain close to the same product state $I_\M/d\otimes\sigma_\Env$ for some environment state $\sigma_\Env$, and the recovery channel gives the same error, as in \eqref{eq:intro-duality}. The joint discrimination measurement is auxiliary; the communication decoder acts only on $\B$.}
\label{fig:operational-duality}
\end{figure}

In this paper, we use a one-shot state discrimination approach to extend recently developed packing techniques to quantum communication \cite{ChengSimple,BeigiTomamichel24,CL}.
We construct explicit decoders through ordinary Petz recovery
\cite{CW} and averaged rotated Petz recovery of an auxiliary Weyl
ensemble, the latter corresponding to logarithmic-derivative
measurements \cite{BeigiTomamichel24,CL}.
The resulting coding bounds hold for every number $n$ of channel uses.
Already for product sources, they yield exponential decay of the entanglement-generation error in squared purified distance at every rate below the corresponding coherent information, with an achievable exponent expressed in terms of Petz--R\'enyi coherent information, as in \cite{BCY}.\footnote{Earlier typical-subspace proofs
	\cite{HorodeckiOppenheimWinter,HHWY}
	already establish exponential error decay for sufficiently
	large blocklengths $n$, with the decay exponent left implicit.}
Specifically, let $n\geq1$, let $\phi_{\R\A^n}$ be a pure source state, and write
$\rho_{\R\B^n}=(\id_\R\otimes\N^{\otimes n})(\phi_{\R\A^n})$.\footnote{Our one-shot theorem also applies to products of different channels and to general correlated block channels.}
Optimizing the construction over sources gives a code of target Schmidt rank $d_n$ satisfying
\begin{equation}
 \eps(d_n)\leq
 2^{-\sup_{\alpha\in[\sfrac12,1)}\frac{1-\alpha}{\alpha}
 \left[\sup_{\phi_{\R\A^n}} I_\alpha(\R\rangle\B^n)_\rho-\log_2 d_n\right]},
 \qquad n\geq1.
 \label{eq:intro-exponent}
\end{equation}
Here $I_\alpha$ is the Petz--R\'enyi coherent information defined in Section~\ref{sec:coding}.
Continuity at $\alpha=1$ makes the exponent positive whenever $\log d_n <I(\R\rangle\B^n)_\rho$; for product sources, additivity yields exponential decay in $n$ at every rate below the corresponding coherent information.
Optimizing over correlated block inputs may improve the achievable error exponent at rates below the quantum capacity.\footnote{Similar phenomenon has been shown for entanglement-assisted communication \cite{ChengBerta26}.}
The achievability bound mirrors the quantum Arimoto converse underlying the recent strong-converse results \cite{ChengTomamichel26,BeigiTomamichel26}:
%For any code of target dimension $d_n$, with encoded channel output $\theta_{\M\B^n}$ before decoding, that converse gives
\begin{equation}
 1-\eps(d_n)\leq
 2^{-\sup_{\alpha>1}\frac{\alpha-1}{\alpha}
 \left[\log_2d_n-\sup_{\phi_{\R\A^n}} \widetilde I_\alpha(\R\rangle\B^n)_\rho\right]},
 \qquad n\geq1,
 \label{eq:intro-sc}
\end{equation}
where $\widetilde I_\alpha$ is the negative optimized sandwiched conditional R\'enyi entropy \cite{ChengTomamichel26}.

Section~\ref{sec:duality} reviews the operational duality and gives a self-contained proof. Section~\ref{sec:coding} presents the coding theorem together with its construction and error analysis for entanglement generation and transmission.
Section~\ref{sec:small-deviations} shows the binary-testing bound by using the author's trace inequality \cite{ChengSimple}.
Section~\ref{sec:consequences} concludes this paper.
Appendix~\ref{app:auxiliary} collects the tilting and twirling statements; Appendix~\ref{app:petz} explains the Petz recovery correspondence.

\section{Entanglement generation and operational duality} \label{sec:duality}

All systems are finite dimensional, and $|\M|$ denotes the dimension of $\M$. Channels are completely positive, trace-preserving linear maps; a star `$*$' denotes their trace adjoint. A POVM is a family of positive operators that sums to the identity. For normalized states, write
\[
 F(\rho,\sigma)=\|\sqrt{\rho}\sqrt{\sigma}\|_1,
 \qquad P^2(\rho,\sigma)=1-F^2(\rho,\sigma).
\]
Thus $F$ is root fidelity and $P$ is purified distance. Let $|\M|=|\Mh|=d$ and
\[
 \ket{\Phi}_{\M\Mh}=\frac1{\sqrt d}\sum_{m=0}^{d-1}\ket m_\M\ket m_\Mh,
 \qquad \Phi_{\M\Mh}=\ket{\Phi}\bra{\Phi}_{\M\Mh}.
\]
An entanglement-generation code for a channel $\N_{\A\to\B}$ is a pair consisting of a normalized pure preparation $\psi_{\M\A}$ and a decoding channel $\D_{\B\to\Mh}$. Alice retains $\M$ and sends $\A$ through $\N$; Bob applies $\D$. Its entanglement-generating error is
\begin{align}
 \eps(\psi,\D;\N)
 &\coloneqq P^2\!\left((\id_\M\otimes(\D\circ\N))(\psi_{\M\A}),\Phi_{\M\Mh}\right)\notag\\
 &=1-\Tr\!\left[\Phi_{\M\Mh}
 (\id_\M\otimes(\D\circ\N))(\psi_{\M\A})\right].
 \label{eq:eg-error}
\end{align}
The preparation need not have a uniform marginal on $\M$. For a specified code of target dimension $d$, we abbreviate its error as $\eps(d)$; for a random ensemble of such codes, $\avgeps(d)$ denotes the expectation of \eqref{eq:eg-error}. The randomness describes code selection, and the selected encoder and decoder are fixed before communication.

Let $\{W_z\}_{z=1}^{d^2}$ be the $d$-dimensional Weyl unitaries, with $W_1=I_\M$. Concretely, these are the shift and phase operators
$\ket m\mapsto e^{2\pi \mathrm{i} vm/d}\ket{m+u\bmod d}$, for $u,v=0,\ldots,d-1$, indexed by $z$. Their conjugations satisfy
\begin{equation}
 \sum_{z=1}^{d^2}(W_z\otimes I_\B)X_{\M\B}(W_z^\dagger\otimes I_\B)
 =dI_\M\otimes\Tr_\M X_{\M\B}.
 \label{eq:twirl}
\end{equation}
We write $P_{\rm e}^{\star}$ for the minimum average discrimination error over all POVMs.

\begin{theorem}[Operational duality {\cite{KRS, Jencova}}]
\label{thm:duality}
For every normalized pure state $\theta_{\M\B\Env}$,
\begin{equation}
\begin{aligned}
 \min_{\D_{\B\to\Mh}}
 P^2\!\left((\id_\M\otimes\D)(\theta_{\M\B}),\Phi_{\M\Mh}\right)
 &=
 P_{\rm e}^{\star}\!\left(
 \left\{\frac1{d^2},\,
 (W_z\otimes I_\B)\theta_{\M\B}(W_z^\dagger\otimes I_\B)
 \right\}_{z=1}^{d^2}\right)
 \\[1mm]
 &=
 \min_{\substack{\sigma_\Env\geq0\\ \Tr\sigma_\Env=1}}
 P^2\!\left(\theta_{\M\Env},\frac{I_\M}{d}\otimes\sigma_\Env\right).
\end{aligned}
\label{eq:duality}
\end{equation}
The first minimum is over channels from $\B$ to $\Mh$.
\end{theorem}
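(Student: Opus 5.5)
The plan is to prove the two equalities separately. Throughout, write $\theta^z_{\M\B}\coloneqq(W_z\otimes I_\B)\theta_{\M\B}(W_z^\dagger\otimes I_\B)$ and $\pi_\M=I_\M/d$. I would phrase both equalities in terms of fidelities: $\max_\D F^2((\id\otimes\D)(\theta_{\M\B}),\Phi)$, the optimal success probability $1-P_{\rm e}^\star$, and $\max_\sigma F^2(\theta_{\M\Env},\pi_\M\otimes\sigma_\Env)$. Since $\Phi$ is pure, $F^2((\id\otimes\D)(\theta),\Phi)=\Tr[\Phi\,(\id\otimes\D)(\theta_{\M\B})]$, as already used in \eqref{eq:eg-error}.

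\emph{First equality (recovery $\Leftrightarrow$ discrimination).}

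\textbf{Decoder to measurement.} Given a decoder $\D_{\B\to\Mh}$, set
\[
E_z\coloneqq(\id_\M\otimes\D^*)\big((W_z\otimes I)\Phi_{\M\Mh}(W_z^\dagger\otimes I)\big).
\]
The Bell states $(W_z\otimes I)\ket{\Phi}$ form an orthonormal basis and $\D^*$ is unital, so $\{E_z\}$ is a POVM on $\M\B$. Since $W_z$ acts only on $\M$, it commutes with $\id_\M\otimes\D$. Hence each term equals $\Tr[\Phi(\id\otimes\D)(\theta_{\M\B})]$, and the average success probability is exactly the recovery fidelity squared. This gives $P_{\rm e}^\star\le$ the recovery error.

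\textbf{Measurement to decoder.} Start from an optimal POVM and symmetrize it over the Weyl group. Set
\[
E'_z=\frac1{d^2}\sum_y (W_y\otimes I)E_{z\cdot y}(W_y^\dagger\otimes I),
\]
with the index relabeled using the group law of the Weyl operators up to phases. This does not change the success probability and yields a covariant POVM $E_z=(W_z\otimes I)E_1(W_z^\dagger\otimes I)$. By \eqref{eq:twirl}, $\sum_z E_z=I$ forces $\Tr_\M E_1=I_\B/d$. Then define $\D$ through its Choi operator: $\D^*(\ket{m}\bra{m'}_{\Mh})$ is determined by $E_1=(\id\otimes\D^*)(\Phi)$. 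Concretely, the transpose $d\,E_1^{T_\M}$ is a positive operator whose partial trace is the identity on $\B$, which is exactly the Choi condition for a trace-preserving map $\B\to\Mh$. The success probability $\Tr[E_1\theta_{\M\B}]$ then equals $\Tr[\Phi(\id\otimes\D)(\theta_{\M\B})]$.

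\emph{Second equality (recovery $\Leftrightarrow$ decoupling), via Uhlmann's theorem.}

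\textbf{Decoder to product state.} Given $\D$ with Stinespring isometry $U_{\B\to\Mh G}$, let $\ket{\omega}_{G\Env}\coloneqq(\bra{\Phi}_{\M\Mh}\otimes I)U\ket{\theta}$, which is unnormalized. Then $\|\omega\|^2=\Tr[\Phi(\id\otimes\D)(\theta)]$. The normalized pure state $\Phi\otimes\omega/\|\omega\|^2$ has overlap $\|\omega\|^2$ with $U\ket\theta$. Its $\M\Env$ marginal is $\pi_\M\otimes\sigma_\Env$ with $\sigma_\Env=\Tr_G\omega/\|\omega\|^2$. Monotonicity of fidelity under partial trace (over $\Mh G$) then gives $F^2(\theta_{\M\Env},\pi\otimes\sigma)\ge\|\omega\|^2$.

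\textbf{Product state to decoder.} Fix $\sigma_\Env$ with purification $\ket{\varphi}_{\Env B'}$. The state $\ket\Phi_{\M\Mh}\ket\varphi_{\Env B'}$ purifies $\pi_\M\otimes\sigma_\Env$, while $\ket\theta_{\M\B\Env}$ purifies $\theta_{\M\Env}$. Uhlmann's theorem provides a partial isometry $V_{\B\to\Mh B'}$, which can be completed to a channel, with
\[
F(\theta_{\M\Env},\pi\otimes\sigma)=|\bra{\Phi}\bra{\varphi}V\ket{\theta}|.
\]
Taking $\D=\Tr_{B'}\circ V(\cdot)V^\dagger$ and using monotonicity again, $\Tr[\Phi(\id\otimes\D)(\theta)]\ge F^2$.

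The main obstacle I expect is the measurement-to-decoder step. It requires the covariance symmetrization to be handled carefully, since the Weyl operators compose only up to phases, so the index relabeling must be checked. It also requires getting the transpose in the Choi correspondence right so that $\D$ is genuinely CPTP; the paper's Appendix~\ref{app:auxiliary} twirling statement would be used there.
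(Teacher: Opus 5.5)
Your proposal follows the paper's proof essentially step by step. The forward direction uses $T_{\M\B}=(\id_\M\otimes\D^*)(\Phi_{\M\Mh})$ and its Weyl conjugates; the converse builds a Weyl-averaged seed; the second equality goes through Uhlmann's theorem. You split that last step into two inequalities with the explicit vector $\omega_{G\Env}$, while the paper writes one chain by optimizing the purification of $\sigma_\Env$; the content is the same, and those steps are correct.

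There is one false claim, in the measurement-to-decoder step: $d\,E_1^{T_\M}$ is not positive in general, because partial transposition does not preserve positivity. Take $\B=\Mh$ and $\D=\id$. Then $E_1=\Phi_{\M\B}$, and $d\,\Phi_{\M\B}^{T_\M}$ is the swap operator, which has eigenvalue $-1$ on antisymmetric vectors. The Choi operator of the desired map $\B\to\Mh$ is $d$ times the \emph{full} transpose $E_1^{\mathsf T}=\overline{E_1}$, with the tensor factors reordered. That operator is positive, and its partial trace over the output is $d(\Tr_\M E_1)^{\mathsf T}=I_\B$.

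It is simpler to avoid transposes, and your preceding sentence already sets this up. The operator $d\,E_1=\sum_{m,m'}\ket m\bra{m'}_\M\otimes\D^*(\ket m\bra{m'}_{\Mh})$ is the Choi operator of $\D^*$. So $E_1\geq0$ makes $\D^*$ completely positive, and $\Tr_\M E_1=I_\B/d$ makes it unital; hence $\D$ is a channel. This is what the paper's formula \eqref{eq:decoder-action} encodes: there the transpose acts on the output operator, not partially on the Choi operator.

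Your other anticipated obstacle disappears if you follow the paper and skip covariantization. Set $T_{\M\B}=d^{-2}\sum_z(W_z^\dagger\otimes I_\B)\Pi^z_{\M\B}(W_z\otimes I_\B)$ directly. Then $T\geq0$ and $\Tr_\M T=d^{-2}\Tr_\M I_{\M\B}=I_\B/d$. Moreover, $\Tr[\theta_{\M\B}T]$ equals the POVM's average success. No relabeling is needed, and phases cancel under conjugation anyway.

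Your symmetrized $E'_1$ coincides with this $T$ provided $z\cdot y$ is the label of $W_y^\dagger W_z$ (up to phase), since only that choice preserves the success probability. The covariance of the remaining $E'_z$ is never used, because the forward direction already rebuilds the covariant POVM from $\D$.
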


\begin{proof}
We first establish the constructive decoder-measurement correspondence. A channel $\D_{\B\to\Mh}$ determines
\begin{equation}
 T_{\M\B}=(\id_\M\otimes\D^*)(\Phi_{\M\Mh}),
 \qquad T_{\M\B}\geq0,\qquad
 \Tr_\M T_{\M\B}=\frac{I_\B}{d}.
 \label{eq:choi}
\end{equation}
Its Weyl conjugates
\begin{equation}
 \left\{(W_z\otimes I_\B)T_{\M\B}(W_z^\dagger\otimes I_\B)\right\}_{z=1}^{d^2}
 \label{eq:seed-povm}
\end{equation}
form a POVM by \eqref{eq:twirl}. Each hypothesis and its corresponding test undergo the same conjugation. Consequently, every correct-outcome probability equals
\begin{equation}
 \Tr[\theta_{\M\B}T_{\M\B}]
 =\Tr\!\left[
 \Phi_{\M\Mh}(\id_\M\otimes\D)(\theta_{\M\B})
 \right],
 \label{eq:same-success}
\end{equation}
which is the squared fidelity with the maximally entangled target.

Conversely, from any POVM $\{\Pi_{\M\B}^z\}_{z=1}^{d^2}$ define
\begin{equation}
 T_{\M\B}
 =\frac1{d^2}\sum_{z=1}^{d^2}
 (W_z^\dagger\otimes I_\B)\Pi_{\M\B}^z(W_z\otimes I_\B).
 \label{eq:from-povm}
\end{equation}
It is positive and has partial trace $I_\B/d$. The normalized Choi characterization therefore supplies a channel satisfying \eqref{eq:choi}. Its overlap with $\theta_{\M\B}$ is exactly the given POVM's average success. Explicitly,
\begin{equation}
 \D(\xi_\B)
 =d\left(\Tr_\B[T_{\M\B}(I_\M\otimes\xi_\B)]\right)^{\mathsf T},
 \label{eq:decoder-action}
\end{equation}
where the transpose is in the basis defining $\Phi$, with $\M$ identified with $\Mh$. This proves the first equality.

For the second equality, purify $\sigma_\Env$ and dilate the decoder to an isometry on $\B$, allowing an arbitrarily large auxiliary output. Uhlmann's theorem equates the fidelity between $\theta_{\M\Env}$ and $I_\M/d\otimes\sigma_\Env$ with the maximum overlap of their purifications. A purification of the latter state is $\ket{\Phi}_{\M\Mh}$ tensored with a purification of $\sigma_\Env$. Optimizing its auxiliary vector gives the fidelity of the decoded state with $\Phi_{\M\Mh}$. Optimizing the decoder and $\sigma_\Env$, then squaring and subtracting from one, proves the claim.
\end{proof}

The discrimination measurement is joint on $\M\B$. In that auxiliary problem it can be realized by decoding $\B$ and measuring $\M\Mh$ in the Weyl Bell basis. In entanglement generation, Bob applies only the channel on $\B$; the joint Bell measurement represents the fidelity test. The $d^2$ labels describe the discrimination problem, while the recovered entanglement is $\log_2d$ ebits.

For example, send Bob's half of $\Phi$ through a flagged erasure channel. When no erasure occurs, the orbit consists of orthogonal Bell states. On the erased branch, all $d^2$ hypotheses coincide and the best success probability is $1/d^2$. Both optimal errors therefore equal the erasure probability multiplied by $1-1/d^2$. The discrimination interpretation identifies exactly which part of the state prevents recovery.

%The decoupling expression in \eqref{eq:duality} must be optimized over $\sigma_\Env$: substituting the fixed marginal $\theta_\Env$ gives a sufficient bound, rather than the exact identity.

\section{Error analysis of one-shot entanglement generation}
\label{sec:coding}

Fix a channel $\N_{\A\to\B}$, a normalized input state $\phi_\A$, and a purification $\phi_{\R\A}$. We keep this reference unchanged throughout the proof; it may be minimal, with $|\R|=\operatorname{rank}\phi_\A$.
Write
\begin{align} \label{eq:channel_output}
 \rho_{\R\B}=(\id_\R\otimes\N)(\phi_{\R\A}).
\end{align}
This source channel output $\rho_{\R\B}$ is different from the encoded channel output $\theta_{\M\B}$ in the operational duality, Theorem~\ref{thm:duality}; the encoding construction below obtains the latter from the former.

For $\alpha\in[\sfrac12,1)$, define the Petz--R\'enyi coherent information \cite{Petz1988, SW13, TBH} by
\begin{equation}
 I_\alpha(\R\rangle\B)_\rho
 =-\frac{\alpha}{1-\alpha}
 \log_2\Tr\!\left[\left(\Tr_\R[\rho_{\R\B}^{\alpha}]\right)^{1/\alpha}\right].
 \label{eq:coherent}
\end{equation}
It is invariant under isometries on the purifying reference. 
Throughout the paper, all information quantities and rates use base-two logarithms unless written as $\ln$."

\begin{theorem}[One-shot random coding bound for entanglement generation]
\label{thm:coding}
%For every channel input and purification as above, every integer $d\geq2$, and every $\alpha\in[\sfrac12,1)$, there is an ensemble of pure preparations and explicit decoders whose average channel input is $\phi_\A$. The two quotient decoders introduced in Section~\ref{sec:encoding-ensemble} give the following bounds with the same preparation ensemble.
For $2\leq d\leq|\R|$, $\alpha \in [\sfrac12, 1)$, any channel input $\phi_\A$, and the corresponding channel output $\rho_{\R\B}$ in \eqref{eq:channel_output},
the ensemble error satisfies
\begin{align}
 \avgeps(d)&\leq c_\alpha
 \left(\frac{d|\R|-1}{|\R|^2-1}\right)^{\frac{1-\alpha}{\alpha}}
 \Tr\!\left[\rho_{\R\B}^{\alpha}
 \left(|\R|I_\R\otimes\Tr_\R\rho_{\R\B}^{\alpha}
                   -\rho_{\R\B}^{\alpha}\right)^{\frac{1-\alpha}{\alpha}}\right]
 \label{eq:operator-bound}\\
 &\leq c_\alpha
 \left[\frac{d|\R|-1}{|\R|^2-1}
 \left(|\R|\,2^{-I_\alpha(\R\rangle\B)_\rho}-1\right)\right]^{\frac{1-\alpha}{\alpha}}.
 \label{eq:scalar-bound}
\end{align}
For $d\geq|\R|$, one preparation and decoder satisfy
\begin{align}
 \eps(d)
 &\leq c_\alpha\Tr\!\left[\rho_{\R\B}^{\alpha}
 \left(dI_\R\otimes\Tr_\R\rho_{\R\B}^{\alpha}
       -\rho_{\R\B}^{\alpha}\right)^{\frac{1-\alpha}{\alpha}}\right]
 \label{eq:large-target-operator}\\
 &\leq c_\alpha
 \left(d\,2^{-I_\alpha(\R\rangle\B)_\rho}-1\right)^{\frac{1-\alpha}{\alpha}}.
 \label{eq:large-target-scalar}
\end{align}
For every $d\geq2$, these bounds imply
\begin{equation}
 \avgeps(d)\leq c_\alpha\,
 2^{-\frac{1-\alpha}{\alpha}(I_\alpha(\R\rangle\B)_\rho-\log_2d)}.
 \label{eq:exponential-bound}
\end{equation}
Here, the $c_\alpha \coloneqq \left(\frac{1-\alpha}{\alpha}\right)^{\frac{1-\alpha}{\alpha}}
\left(\frac{2\alpha-1}{\alpha}\right)^{\frac{2\alpha-1}{\alpha}} \in [\sfrac12,1]$ is defined in \eqref{eq:constant}.
%At least one preparation and its decoder attain the ensemble bounds.
\end{theorem}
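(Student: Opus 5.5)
The plan is to reduce the coding problem to the auxiliary Weyl discrimination problem of Theorem~\ref{thm:duality}, bound that discrimination error with a Petz-type (logarithmic-derivative) measurement, and then average over a random encoder.

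\textbf{Encoding ensemble.} For $2\le d\le|\R|$, I would follow the covariant-subspace idea of \cite{HHWY}. Alice applies to the reference $\R$ a measurement whose elements are unitarily rotated projections onto $d$-dimensional subspaces, $U P_d U^\dagger$ with $U$ Haar distributed. Each outcome, together with the isometry $\M\hookrightarrow \R$ onto the selected subspace, gives a pure preparation
\[
 \psi_{\M\A}\propto (V_U^\dagger\otimes I_\A)\phi_{\R\A}.
\]
The outcome probabilities are chosen so that the average channel input is $\phi_\A$. The encoded output is then $\theta_{\M\B}\propto (V_U^\dagger\otimes I)\rho_{\R\B}(V_U\otimes I)$. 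For $d\ge|\R|$ there is no randomness: embed $\R$ isometrically into $\M$ and take $\theta_{\M\B}=\rho_{\R\B}$ viewed on the larger space. Because $I_\alpha$ is invariant under isometries on the reference, nothing is lost by this embedding.

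\textbf{Discrimination step.} By the first equality in \eqref{eq:duality}, together with the explicit decoder \eqref{eq:decoder-action} built from any POVM, it suffices to bound the error of one explicit measurement on the Weyl ensemble $\theta_z=(W_z\otimes I)\theta_{\M\B}(W_z^\dagger\otimes I)$. I would take the averaged rotated Petz / logarithmic-derivative measurement of \cite{BeigiTomamichel24,CL}, whose one-shot error obeys
\[
 P_{\rm e}\le c_\alpha\,\frac1{d^2}\sum_z\Tr\!\Bigl[\theta_z^\alpha\Bigl(\sum_{z'\ne z}\theta_{z'}^\alpha\Bigr)^{\frac{1-\alpha}{\alpha}}\Bigr],
\]
with the constant $c_\alpha$ of \eqref{eq:constant}, which I take from the tilting statement in Appendix~\ref{app:auxiliary}. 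By Weyl covariance every term equals the $z=1$ term. By \eqref{eq:twirl},
\[
 \sum_{z\ne1}\theta_z^\alpha=d\,I_\M\otimes\Tr_\M\theta^\alpha-\theta^\alpha .
\]
For $d\ge|\R|$ this gives \eqref{eq:large-target-operator} directly.

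\textbf{Averaging.} For the random ensemble, I would write the error as an average over $U$ and use that $s=\frac{1-\alpha}{\alpha}\in(0,1]$ makes $X\mapsto\Tr[Y X^s]$ concave. Jensen's inequality then moves the expectation inside the power, with the probability weights absorbed by the $\alpha$-homogeneity. What remains is a Haar second-moment (twirling) computation of
\[
 \mathbb E_U\,\bigl[P_U\rho^\alpha P_U\otimes\ldots\bigr].
\]
The two-design formula, with its symmetric/antisymmetric coefficients, should produce exactly the factor $\frac{d|\R|-1}{|\R|^2-1}$ multiplying $|\R|I_\R\otimes\Tr_\R\rho^\alpha-\rho^\alpha$, which is \eqref{eq:operator-bound}.

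\textbf{Main obstacle.} I expect this averaging step to be the hardest. The first factor $\theta^\alpha$ and the argument of the power are correlated through the same $U$. So one must condition on the outcome and average only the "other hypotheses" term, or else use a joint twirl over $\R\R'$.

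\textbf{Scalar bounds.} I would pass to \eqref{eq:scalar-bound} and \eqref{eq:large-target-scalar} by concavity once more. Writing $\Tr[\rho^\alpha X^s]=\Tr\rho^\alpha\cdot\langle X^s\rangle$ with respect to the normalized weight $\rho^\alpha/\Tr\rho^\alpha$, Jensen gives the bound
\[
 (\Tr\rho^\alpha)^{1-s}\bigl(\Tr[\rho^\alpha X]\bigr)^s,
 \qquad
 \Tr[\rho^\alpha X]=|\R|\Tr[(\Tr_\R\rho^\alpha)^2]-\Tr\rho^{2\alpha}.
\]
This crude route does not reproduce the $1/\alpha$ power in \eqref{eq:coherent}. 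I would instead use a Hölder/variational step, choosing the weight $\sigma_\B\propto(\Tr_\R\rho^\alpha)^{1/\alpha}$ so that $|\R|2^{-I_\alpha}$ appears, while the $-\rho^\alpha$ term is retained to give the $-1$. Matching the constant exactly here is the second delicate point.

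\textbf{Exponential bound.} Finally, \eqref{eq:exponential-bound} follows by dropping the $-1$ and using $\frac{(d|\R|-1)|\R|}{|\R|^2-1}\le d$ when $d\le|\R|$.
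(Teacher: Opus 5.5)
For $d\ge|\R|$ your argument matches the paper's. You embed $\R$ in $\M$, take the $\alpha$-weighted measurement of the Weyl orbit of $\rho_{\M\B}$, collapse the other hypotheses with the twirl, and apply Lemma~\ref{lemm:tilting}. For $2\le d\le|\R|$, however, there is a genuine gap exactly at the averaging step you flag, and the route you describe does not go through with the measurement you chose. Write $s=\frac{1-\alpha}{\alpha}$. Building the tests from $(\theta^U)^\alpha$ gives, after absorbing $p(U)$ by homogeneity, a per-code bound whose leading weight is $(V_U^\dagger\rho_{\R\B}V_U)^\alpha$. This weight depends on $U$, so concavity of $X\mapsto\Tr[YX^s]$ for fixed $Y$ cannot move $\Exp_U$ inside. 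You also cannot trade it for $V_U^\dagger\rho_{\R\B}^\alpha V_U$. Operator Jensen gives only $V_U^\dagger\rho^\alpha V_U\le(V_U^\dagger\rho V_U)^\alpha$, which is the wrong direction for the leading factor, and inside the power the same term enters with both signs. Hence the twirl $\Exp_U[P_U\rho^\alpha P_U\otimes\cdots]$ you write down never arises from your measurement. Your two suggested remedies (conditioning on the outcome, a joint twirl on $\R\R'$) are not developed far enough to remove this $U$-dependence.

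The missing idea is to make the weight $U$-independent before averaging. The paper does this in five coordinated steps.
\begin{enumerate}[(i)]
\item It builds the tests from the compressed \emph{source} power, $T^{U,j}=V_j^\dagger U^\dagger\rho_{\R\B}^\alpha UV_j/(dI_\M\otimes\Tr_\M[V_j^\dagger U^\dagger\rho_{\R\B}^\alpha UV_j])$, not from $(\theta^{U,j})^\alpha$.
\item It uses a finite family with $\sum_jw_jV_jV_j^\dagger=\frac d{|\R|}I_\R$. Then, for each fixed $U$, the encoding label is kept coherently by an isometry $L_U:\R\to\mathsf J\M$, for which $(L_U\rho L_U^\dagger)^\alpha=L_U\rho^\alpha L_U^\dagger$.
\item It dominates $(L_U\rho L_U^\dagger)^\alpha$ by the block-diagonal operator $D_U=|\R|\bigoplus_jI_\M\otimes\Tr_\M[V_j^\dagger U^\dagger\rho^\alpha UV_j]$, using Cauchy--Schwarz and the dimension bound. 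This gives $(L_U\rho L_U^\dagger)^\alpha/D_U\le\bigoplus_jT^{U,j}$.
\item It applies the \emph{binary} tilting lemma once on $\mathsf J\M\B$, with $A=L_U\rho L_U^\dagger$ and $B^\alpha=D_U-A^\alpha$. The Weyl orbit plays no role in this tilting, which is why $|\R|$ rather than $d$ appears in the bound.
\item It pulls back by $L_U^\dagger X^sL_U\le(L_U^\dagger XL_U)^s$. This leaves the fixed weight $\rho_{\R\B}^\alpha$, with all $U$-dependence inside the power.
\end{enumerate}
Only then does concavity permit the Haar average, and the projector moment \eqref{eq:projector-moment} gives \eqref{eq:mean}.

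Your scalar step is also left open. What is needed is superadditivity rather than a H\"older step. The function $f(X)=(\Tr[\rho^\alpha X^s])^{1/s}$ is concave and $1$-homogeneous, hence $f(X+Y)\ge f(X)+f(Y)$. Take $X=\rho^\alpha$, for which $f(X)=1$, and $Y=|\R|I_\R\otimes\Tr_\R\rho^\alpha-\rho^\alpha$, for which $f(X+Y)=|\R|\,2^{-I_\alpha(\R\rangle\B)_\rho}$. This yields $\Tr[\rho^\alpha Y^s]\le(|\R|\,2^{-I_\alpha}-1)^s$, and the same argument with $d$ in place of $|\R|$ handles the large target.
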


%The target $d=1$ has zero error. For $d\leq|\R|$ we select a subspace; for larger targets we use the source directly.

Section~\ref{sec:encoding-ensemble} provides the coding construction and we prove \ref{thm:coding} in Section~\ref{sec:proof}.
Section~\ref{sec:tightness} shows that the coefficient $c_{\alpha}$ is sharp.
We discuss entanglement transmission and strong subspace transmission in Section~\ref{sec:transmission}.

\subsection{The encoding ensemble and its discrimination POVMs}
\label{sec:encoding-ensemble}

We use a positive operator quotient to normalize the discrimination measurement. For positive operators $Y$ and $X$ with $\supp (Y)\subseteq\supp (X)$, choose either convention
\begin{equation}
 \frac{Y}{X}\coloneqq
 \begin{cases}
 \displaystyle\mathrm D\ln[X](Y)
 =\int_0^\infty(X+tI)^{-1}Y(X+tI)^{-1}\,\mathrm dt,
 &\text{logarithmic derivative},\\[2mm]
 X^{-\sfrac12}YX^{-\sfrac12},&\text{symmetric quotient},
 \end{cases}
 \label{eq:quotient}
\end{equation}
and keep this choice fixed throughout the construction. Inverses and the integral are restricted to $\supp(X)$ and extended by zero on its orthogonal complement.
Both conventions give $X/X$ equal to the support projection of $X$; the natural logarithm ensures this normalization. Both are positive and linear in the numerator, unitarily covariant, and obey $Y/(rX)=r^{-1}(Y/X)$ for $r>0$.

Suppose $2\leq d\leq|\R|$. Choose finitely many isometries $V_j:\M\to\R$ and positive weights $w_j$ satisfying
\begin{equation}
 \sum_jw_j=1,
 \qquad
 \sum_jw_jV_jV_j^\dagger=\frac d{|\R|}I_\R.
 \label{eq:resolution}
\end{equation}

The random subspace encoding used below is inspired by \cite{HHWY}; see also \cite{HSWGaussian,CW} for related coding constructions. 
The following explicit choice satisfies \eqref{eq:resolution} for every $2\leq d\leq|\R|$.
Fix an orthonormal basis of $\R$, index $j$ by its $d$-element subsets, and let $V_j$ map the standard basis of $\M$ to the corresponding subset. Each basis vector belongs to $\binom{|\R|-1}{d-1}$ subsets, so
\[
 \sum_jV_jV_j^\dagger=\binom{|\R|-1}{d-1}I_\R,
 \qquad
 w_j=\binom{|\R|}{d}^{-1},
 \qquad
 \frac{|\R|w_j}{d}=\binom{|\R|-1}{d-1}^{-1}.
\]
The analysis allows any weighted family satisfying \eqref{eq:resolution}. The subspaces may overlap.

The random code is specified by a unitary $U$ and an index $j$, both known to Alice and Bob before communication. Draw $U$ according to normalized Haar measure on $\R$, and conditionally draw $j$ with probability
\begin{equation}
 p_j(U)=\frac{|\R|}{d}w_j
 \Tr[V_j^\dagger U^\dagger\phi_\R UV_j].
 \label{eq:encoding-probability}
\end{equation}
The encoding for this pair is the pure preparation
\begin{equation}
 \ket{\psi^{U,j}}_{\M\A}
 =\frac{(V_j^\dagger U^\dagger\otimes I_\A)\ket\phi_{\R\A}}
 {\sqrt{\Tr[V_j^\dagger U^\dagger\phi_\R UV_j]}}.
 \label{eq:encoding}
\end{equation}
Alice retains $\M$ and sends $\A$ through the channel; Bob uses the decoder specified below for the same pair $(U,j)$. The weights $w_j$ describe the fixed subspace family, whereas $p_j(U)$ also accounts for the source mass in the rotated subspace. Preparations with $p_j(U)=0$ are omitted, but all indices are retained in operator sums. For every $U$, \eqref{eq:resolution} gives
\begin{equation}
 \sum_jp_j(U)=1,
 \qquad
 \sum_jp_j(U)\psi_\A^{U,j}=\phi_\A.
 \label{eq:average-input}
\end{equation}
Thus the prescribed input is the ensemble average even before averaging over $U$. The random choice describes code selection: a suitable pair $(U,j)$ will be fixed before use, so no shared randomness or transmission of $j$ is required during communication.

The encoded output is
\begin{equation}
 \theta_{\M\B}^{U,j}
 =\frac{(V_j^\dagger U^\dagger\otimes I_\B)\rho_{\R\B}(UV_j\otimes I_\B)}
 {\Tr[V_j^\dagger U^\dagger\phi_\R UV_j]}.
 \label{eq:encoded-output}
\end{equation}
For each encoding, choose the complete POVM for its $d^2$ Weyl hypotheses to be
\begin{equation}
 \left\{
 \frac{(W_zV_j^\dagger U^\dagger\otimes I_\B)\rho_{\R\B}^{\alpha}
                    (UV_jW_z^\dagger\otimes I_\B)}
 {dI_\M\otimes
  \Tr_\M[(V_j^\dagger U^\dagger\otimes I_\B)\rho_{\R\B}^{\alpha}(UV_j\otimes I_\B)]}
 \right\}_{z=1}^{d^2}.
 \label{eq:code-povm}
\end{equation}
The Weyl identity \eqref{eq:twirl} makes the numerators sum to the denominator. Linearity of either quotient therefore proves normalization on its support. On the kernel of the $\B$-marginal in the denominator, complete each test by $I_\M/d^2$. Every hypothesis has zero weight there because $\rho_{\R\B}$ and $\rho_{\R\B}^{\alpha}$ have the same support.

Let $T_{\M\B}^{U,j}$ be the completed test for $W_1=I_\M$. Its Weyl conjugates are the remaining tests, and Theorem~\ref{thm:duality} supplies the corresponding decoder. Write its entanglement-generating error as
\begin{equation}
 \eps^{U,j}
 =1-\Tr[\theta_{\M\B}^{U,j}T_{\M\B}^{U,j}].
 \label{eq:code-error}
\end{equation}
Thus $(U,j)$ specifies the encoding and decoder, while $z$ labels the discrimination outcomes. All tests are constructed from the source power $\rho_{\R\B}^{\alpha}$ before compression.

For the auxiliary Weyl ensemble defined in Appendix~\ref{app:petz},
the symmetric and logarithmic-derivative measurements are the
ordinary and averaged rotated Petz recovery measurements,
respectively \cite{BDL,SBT,JRSWW}.
Theorem~\ref{thm:duality} converts these measurements into decoding
channels acting on Bob's system $\B$.
Appendix~\ref{app:petz} gives their explicit formulas.
The following proof applies to either choice.

\subsection{Error analysis} \label{sec:proof}

\begin{proof}[Proof of Theorem~\ref{thm:coding}]

Fix $U$. Let $\mathsf J$ be a finite register with one basis vector for each encoding index, and define the isometry
\begin{equation}
 L_U:\R\longrightarrow\mathsf J\M,
 \qquad
 L_U=\sum_j\sqrt{\frac{|\R|w_j}{d}}\,
       \ket j_{\mathsf J}\otimes V_j^\dagger U^\dagger,
 \qquad L_U^\dagger L_U=I_\R.
 \label{eq:encoding-isometry}
\end{equation}
The last identity is \eqref{eq:resolution}. Applied to the source purification, measurement of $\mathsf J$ produces exactly \eqref{eq:encoding-probability}--\eqref{eq:encoding}. In the proof we retain this register and use the auxiliary state
\begin{equation}
 \rho_{\mathsf J\M\B}^{U}=L_U\rho_{\R\B}L_U^\dagger,
 \qquad
 (\rho_{\mathsf J\M\B}^{U})^\alpha
 =L_U\rho_{\R\B}^{\alpha}L_U^\dagger.
 \label{eq:coherent-source}
\end{equation}
Identity operators on $\B$ are implicit. The power identity follows immediately by applying $L_U$ to an orthonormal eigenbasis of $\rho_{\R\B}$.

Cauchy--Schwarz inequality and the dimension bound $X_{\M\B}\leq dI_\M\otimes\Tr_\M X_{\M\B}$ give
\begin{equation}
 (\rho_{\mathsf J\M\B}^{U})^\alpha
 \leq\frac{|\R|}{d}\bigoplus_j
     V_j^\dagger U^\dagger\rho_{\R\B}^{\alpha}UV_j
 \leq |\R|\bigoplus_j
 \left(I_\M\otimes\Tr_\M[V_j^\dagger U^\dagger\rho_{\R\B}^{\alpha}UV_j]\right).
 \label{eq:block-comparison}
\end{equation}
Indeed, the first inequality is the quadratic-form estimate
\[
 \left\|\sum_j\sqrt{w_j}\,\rho_{\R\B}^{\alpha/2}UV_jx_j\right\|^2
 \leq\sum_j\left\|\rho_{\R\B}^{\alpha/2}UV_jx_j\right\|^2.
\]
Apply the positive linear quotient with the last operator in \eqref{eq:block-comparison} as denominator. Its quotient with the middle operator is the direct sum of the uncompleted identity tests, and kernel completion adds positive terms. Consequently,
\begin{equation}
 \frac{(\rho_{\mathsf J\M\B}^{U})^\alpha}
 {|\R|\displaystyle\bigoplus_j
  \left(I_\M\otimes\Tr_\M[V_j^\dagger U^\dagger\rho_{\R\B}^{\alpha}UV_j]\right)}
 \leq\bigoplus_jT_{\M\B}^{U,j}.
 \label{eq:source-comparison}
\end{equation}

The diagonal blocks of $\rho_{\mathsf J\M\B}^{U}$ are $p_j(U)\theta_{\M\B}^{U,j}$. Thus
\begin{align}
 \sum_jp_j(U)\eps^{U,j}
 &=1-\Tr\!\left[\rho_{\mathsf J\M\B}^{U}\bigoplus_jT_{\M\B}^{U,j}\right]\notag\\
 &\overset{\textnormal{(a)}}{\leq}\Tr\!\left[\rho_{\mathsf J\M\B}^{U}
 \frac{|\R|\displaystyle\bigoplus_j
  \left(I_\M\otimes\Tr_\M[V_j^\dagger U^\dagger\rho_{\R\B}^{\alpha}UV_j]\right)
  -(\rho_{\mathsf J\M\B}^{U})^\alpha}
 {|\R|\displaystyle\bigoplus_j
  \left(I_\M\otimes\Tr_\M[V_j^\dagger U^\dagger\rho_{\R\B}^{\alpha}UV_j]\right)}\right]\notag\\
 &\overset{\textnormal{(b)}}{\leq} c_\alpha\Tr\!\left[(\rho_{\mathsf J\M\B}^{U})^\alpha
 \left(|\R|\bigoplus_j
  \left(I_\M\otimes\Tr_\M[V_j^\dagger U^\dagger\rho_{\R\B}^{\alpha}UV_j]\right)
  -(\rho_{\mathsf J\M\B}^{U})^\alpha\right)^{\frac{1-\alpha}{\alpha}}\right]\notag\\
 &\overset{\textnormal{(c)}}{\leq} c_\alpha\Tr\!\left[\rho_{\R\B}^{\alpha}
 \left(\frac{|\R|^2}{d}\sum_jw_j
  (UV_jV_j^\dagger U^\dagger)\otimes
  \Tr_\M[V_j^\dagger U^\dagger\rho_{\R\B}^{\alpha}UV_j]
  -\rho_{\R\B}^{\alpha}\right)^{\frac{1-\alpha}{\alpha}}\right].
 \label{eq:tilt-before-return}
\end{align}
The first inequality (a) uses \eqref{eq:source-comparison}; 
(b) follows from the tilting inequality, Lemma~\ref{lemm:tilting} of Appendix~\ref{app:tilting}, and \eqref{eq:coherent-source}--\eqref{eq:block-comparison}. For the last inequality (c), apply operator Jensen inequality \cite{HP} in the form
\[
 L_U^\dagger X^{\frac{1-\alpha}{\alpha}}L_U
 \leq(L_U^\dagger XL_U)^{\frac{1-\alpha}{\alpha}},\qquad X\geq0.
\]
This applies because $L_U$ is an isometry and $x^{(1-\alpha)/\alpha}$ is operator concave for $\sfrac12\leq\alpha<1$.

Now average over $U$. A second application of operator concavity moves the Haar expectation inside the positive power in the last line of \eqref{eq:tilt-before-return}. For each fixed $j$, the projection $UV_jV_j^\dagger U^\dagger$ has the same Haar second moment. Appendix~\ref{app:moment}, together with $\sum_jw_j=1$, therefore gives
\begin{align}
 &\frac{|\R|^2}{d}\Exp_U\!\left[
 \sum_jw_j(UV_jV_j^\dagger U^\dagger)\otimes
 \Tr_\M[V_j^\dagger U^\dagger\rho_{\R\B}^{\alpha}UV_j]
 \right]-\rho_{\R\B}^{\alpha}\notag\\
 &\qquad=
 \frac{d|\R|-1}{|\R|^2-1}
 \left(|\R|I_\R\otimes\Tr_\R\rho_{\R\B}^{\alpha}
       -\rho_{\R\B}^{\alpha}\right).
 \label{eq:mean}
\end{align}
By substitution, we have proved \eqref{eq:operator-bound} for $\avgeps(d)=\Exp_U\sum_jp_j(U)\eps^{U,j}$.

To obtain the scalar bound, concavity of the positive power implies superadditivity of the homogeneous function
\[
 X\mapsto
 \left(\Tr[\rho_{\R\B}^{\alpha}X^{(1-\alpha)/\alpha}]\right)^{\alpha/(1-\alpha)},
 \qquad X\geq0.
\]
To see this, normalize two operators to have value one and apply concavity with weights proportional to their original values; zero values follow by continuity. Apply superadditivity to $\rho_{\R\B}^{\alpha}$ and
$|\R|I_\R\otimes\Tr_\R\rho_{\R\B}^{\alpha}-\rho_{\R\B}^{\alpha}$.
The first has value $1$, while their sum has value $|\R|\,2^{-I_\alpha(\R\rangle\B)_\rho}$. Therefore
\begin{equation}
 \Tr\!\left[\rho_{\R\B}^{\alpha}
 \left(|\R|I_\R\otimes\Tr_\R\rho_{\R\B}^{\alpha}
                   -\rho_{\R\B}^{\alpha}\right)^{\frac{1-\alpha}{\alpha}}\right]
 \leq\left(|\R|\,2^{-I_\alpha(\R\rangle\B)_\rho}-1\right)^{\frac{1-\alpha}{\alpha}}.
 \label{eq:trace-correction}
\end{equation}
This proves \eqref{eq:scalar-bound}. Dropping the subtractive term and using
$|\R|(d|\R|-1)/(|\R|^2-1)\leq d$ gives \eqref{eq:exponential-bound}.

For $d\geq|\R|$, choose an isometry $E:\R\to\M$ and prepare $(E\otimes I_\A)\ket\phi_{\R\A}$. The channel input is exactly $\phi_\A$. Denote the embedded output by $\rho_{\M\B}$. Its $d^2$ Weyl hypotheses have the POVM
\begin{equation}
 \left\{
 \frac{(W_z\otimes I_\B)\rho_{\M\B}^{\alpha}(W_z^\dagger\otimes I_\B)}
 {dI_\M\otimes\Tr_\R\rho_{\R\B}^{\alpha}}
 \right\}_{z=1}^{d^2},
 \label{eq:large-target-povm}
\end{equation}
with uniform completion on the marginal kernel. Normalization follows from the Weyl sum. Theorem~\ref{thm:duality} gives its decoder. Since
$dI_\R\otimes\Tr_\R\rho_{\R\B}^\alpha\geq\rho_{\R\B}^\alpha$,
tilting applies directly. Restricting the resulting trace to the embedded reference proves \eqref{eq:large-target-operator}. The same scalar argument, with $d$ in place of $|\R|$, gives \eqref{eq:large-target-scalar} and \eqref{eq:exponential-bound}.

Finally, some preparation and its decoder have error no greater than the ensemble average. Fix them before communication. The prescribed input $\phi_\A$ is the ensemble average and need not be the marginal of the selected preparation.
\end{proof}

\begin{remark}
For either quotient, the identity effect in \eqref{eq:large-target-povm} is $|\R|/d$ times the embedded identity effect at target $|\R|$, on the source support. Equivalently, one may decode to dimension $|\R|$ and then embed the output using the complex conjugate of $E$. Their errors obey
\begin{equation}
 \eps(d)=1-\frac{|\R|}{d}+\frac{|\R|}{d}\eps(|\R|).
 \label{eq:embedding-identity}
\end{equation}
Substituting the bound at target $|\R|$ gives an optional complementary estimate, which may improve the direct bound at a fixed $\alpha$. Kernel completions do not affect this identity, and $|\R|=1$ is included since its target error is zero.
\end{remark}

\subsection{Exact noiseless error and the sharp coefficient}
\label{sec:tightness}

Consider the identity channel on $\A$ with uniform input and minimal reference $|\R|=|\A|$. Its output is maximally entangled, so $\rho_{\R\B}^\alpha=\rho_{\R\B}$ and $I_\alpha(\R\rangle\B)_\rho=\log_2|\A|$. For $2\leq d\leq|\A|$, the operator
\[
 |\R|I_\R\otimes\Tr_\R\rho_{\R\B}-\rho_{\R\B}
 =I_{\R\B}-\rho_{\R\B}
\]
annihilates the source. Both \eqref{eq:operator-bound} and \eqref{eq:scalar-bound} give zero error, including when $d$ does not divide $|\A|$. Each selected preparation is maximally entangled of dimension $d$.

For $d>|\A|$, the exact optimal error is
\begin{equation}
 \min_{\psi,\D}\eps(\psi,\D;\id_\A)=1-\frac{|\A|}{d}.
 \label{eq:identity-error}
\end{equation}
The discrimination perspective gives a short converse. For any preparation, its $d^2$ Weyl hypotheses are density operators on a space of dimension $d|\A|$, and each is bounded by the identity. Every POVM therefore has average success at most
\[
 \frac1{d^2}\sum_{z=1}^{d^2}\Tr[\Pi_{\M\B}^z]
 =\frac{|\A|}{d}.
\]
Theorem~\ref{thm:duality} gives the same upper bound on squared recovery fidelity. For either quotient, our POVM \eqref{eq:large-target-povm} attains it: its identity effect on the source is $(|\A|/d)\rho_{\M\B}$. Equivalently, \eqref{eq:embedding-identity} embeds a decoder with zero error. 

The direct bounds \eqref{eq:large-target-operator}--\eqref{eq:large-target-scalar} separately test the sharp prefactor. They reduce to
\begin{equation}
 \eps(d)\leq
 c_\alpha\left(\frac d{|\A|}-1\right)^{\frac{1-\alpha}{\alpha}}.
 \label{eq:identity-bound}
\end{equation}
The choice
\begin{equation}
 \alpha=\frac d{d+|\A|}
 \quad\text{gives}\quad
 c_\alpha\left(\frac d{|\A|}-1\right)^{\frac{1-\alpha}{\alpha}}
 =1-\frac{|\A|}{d}.
 \label{eq:identity-equality}
\end{equation}
Thus this bound too meets the converse exactly.

Moreover, its universal coefficient cannot be reduced at any fixed order. For $\alpha\in(\sfrac12,1)$, choose integer dimensions with $|\A|/d$ approaching $(1-\alpha)/\alpha$. Then
\[
 \frac{1-|\A|/d}{(d/|\A|-1)^{(1-\alpha)/\alpha}}
 \longrightarrow c_\alpha.
\]
At $\alpha=\sfrac12$, let $|\A|/d$ tend to one. Necessity of this coefficient in \eqref{eq:large-target-scalar} is unconditional; its universal sufficiency is Lemma~\ref{lemm:tilting}.

\subsection{Entanglement transmission and strong subspace transmission}
\label{sec:transmission}

Let $\mathsf M'$ be a copy of $\M$. An entanglement-transmission code consists of an encoding channel $\mathscr E_{\mathsf M'\to\A}$ and a decoder $\D_{\B\to\Mh}$, with error
\begin{equation}
 \varepsilon_{\rm ET}(\mathscr E,\D;\N)
 \coloneqq P^2\!\left((\id_\M\otimes(\D\circ\N\circ\mathscr E))(\Phi_{\M\mathsf M'}),\Phi_{\M\Mh}\right).
 \label{eq:et-error}
\end{equation}
Every entanglement-generation code of dimension $d$ and error $\eps(d)$ yields an entanglement-transmission code of the same dimension satisfying \cite[Lemma  14.7]{Wilde_book}
\begin{equation}
 \varepsilon_{\rm ET}(d)\leq 2\eps(d)-\eps(d)^2\leq 2\eps(d).
 \label{eq:eg-to-et}
\end{equation}
We give a short proof here.
Write the encoding as
$\ket\psi_{\M\A}=\sum_{m=0}^{d-1}\ket m_\M\otimes X\ket m_{\mathsf M'}$, where $\Tr [X^\dagger X]=1$.
Rotate a Kraus representation $\{K_i\}_i$ of $\D\circ\N$ so that
\[
 |\Tr K_0X|^2=d(1-\eps(d)).
\]
Since $K_0$ is a contraction, Cauchy--Schwarz inequality gives
$\Tr|K_0|\geq\Tr K_0^\dagger K_0\geq d(1-\eps(d))$.
Choose the adjoint of the polar partial isometry of $K_0$ as one Kraus operator of $\mathscr E$, and complete it to a channel. The corresponding contribution to its entanglement fidelity is $(\Tr|K_0|)^2/d^2\geq(1-\eps(d))^2$, proving \eqref{eq:eg-to-et}. If $d\leq|\A|$, extend that partial isometry to an isometry instead.

Strong subspace transmission requires the same code to preserve every pure input, including its correlations with a reference: its squared purified-distance error is
\begin{equation}
 \varepsilon_{\rm ST}(\mathscr E,\D;\N)
 \coloneqq\max_{\varphi_{\M\mathsf M'}}
 P^2\!\left((\id_\M\otimes(\D\circ\N\circ\mathscr E))(\varphi_{\M\mathsf M'}),\varphi_{\M\Mh}\right),
 \label{eq:st-error}
\end{equation}
where the maximum is over pure states and the target identifies $\mathsf M'$ with $\Mh$.
By \cite[Lemma~24]{BCY}, for every $\eta\in(0,1)$ an entanglement-transmission code of dimension $d$ can be restricted to a subspace of dimension $d_\eta\geq\lceil(1-\eta)d\rceil$ such that
\begin{equation}
 \varepsilon_{\rm ST}(d_\eta)
 \leq\frac{\varepsilon_{\rm ET}(d)}{\eta}
 \leq\frac{2\eps(d)-\eps(d)^2}{\eta}.
 \label{eq:eg-to-st}
\end{equation}
In particular, retaining at least half the dimension costs at most a factor $2$ in purified distance relative to the generation code. These constant losses preserve the achievable error exponent. For any ensemble bound above, first select a code with $\eps(d)\leq\avgeps(d)$ and then apply these conversions.

\section{Hypothesis-testing bounds}
\label{sec:small-deviations}

%At order $\alpha=1$, a single fixed isometry and Haar rotation suffice. Both quotient conventions in \eqref{eq:quotient} remain available. 

Apart from the R\'enyi-type bounds established in Section~\ref{sec:coding}, we show that the hypothesis-testing bound developed by the author for classical communication over quantum channels \cite{ChengSimple} extends to entanglement generation as well.
We note that a concurrent and independent work by Can and Wilde \cite{CW} also obtained the same bound \eqref{eq:haar-wedge} below.
For completeness, we present an alternative proof here.

Write 
\begin{align}
A\wedge B
&\coloneqq \argmax_{H=H^\dagger} \left\{ \Tr[H] : H\leq A, H\leq B \right\}
\label{eq:def:minimum}
\\
&= \frac{A+B-|A-B|}{2}
\end{align}
for the noncommutative minimum.
Its trace gives an operational meaning to minimum error of binary hypothesis testing:
\begin{align} \label{eq:eq:minimum_test}
	\Tr[A\wedge B]
	= \inf_{0\leq T \leq \I} \Tr[A(I-T)] + \Tr[BT].
\end{align}

\begin{theorem}[Hypothesis-testing bounds]
	\label{thm:binary}
	For either quotient in \eqref{eq:quotient}, and $2\leq d\leq|\R|$, the Haar ensemble
	defined below has average channel input $\phi_\A$.
	Its average entanglement-generating error satisfies
	\begin{align}
		\avgeps(d)
		&\leq
		\Tr\!\left[\rho_{\R\B}
		\frac{\frac{d|\R|-1}{|\R|^2-1}
			(|\R|I_\R\otimes\rho_\B-\rho_{\R\B})}
		{\rho_{\R\B}+\frac{d|\R|-1}{|\R|^2-1}
			(|\R|I_\R\otimes\rho_\B-\rho_{\R\B})}\right]
		\label{eq:haar-quotient}\\
		&\leq
		\Tr\!\left[
		\rho_{\R\B}\wedge
		\frac{d|\R|-1}{|\R|^2-1}
		(|\R|I_\R\otimes\rho_\B-\rho_{\R\B})
		\right].
		\label{eq:haar-wedge}
	\end{align}
	For $d\geq|\R|$, the direct preparation and decoder in
	\eqref{eq:large-target-povm}, at $\alpha=1$, satisfy
	\begin{align}
		\eps(d)
		&\leq
		\Tr\!\left[\rho_{\R\B}
		\frac{dI_\R\otimes\rho_\B-\rho_{\R\B}}
		{dI_\R\otimes\rho_\B}\right]
		\label{eq:large-quotient}\\
		&\leq
		\Tr\!\left[
		\rho_{\R\B}\wedge
		(dI_\R\otimes\rho_\B-\rho_{\R\B})
		\right].
		\label{eq:large-wedge}
	\end{align}
	Consequently, every integer $d\geq1$ admits a code with
	\begin{equation}
		\eps(d)\leq
		\Tr[\rho_{\R\B}\wedge dI_\R\otimes\rho_\B].
		\label{eq:universal-wedge}
	\end{equation}
\end{theorem}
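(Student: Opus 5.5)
The plan is to rerun the proof of Theorem~\ref{thm:coding} at order $\alpha=1$, keeping the quotient itself instead of applying the tilting lemma, and to replace the operator-Jensen steps by a convexity property of the quotient. The Haar ensemble is the same encoding \eqref{eq:encoding-probability}--\eqref{eq:encoding} with the POVMs \eqref{eq:code-povm} at $\alpha=1$. Identity \eqref{eq:average-input} then gives the average input $\phi_\A$ exactly as before.

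\emph{First bound \eqref{eq:haar-quotient}.} Fix $U$ and define $D_U:=|\R|\bigoplus_j\bigl(I_\M\otimes\Tr_\M[V_j^\dagger U^\dagger\rho_{\R\B}UV_j]\bigr)$.
\begin{enumerate}[(i)]
\item Steps \eqref{eq:block-comparison}--\eqref{eq:source-comparison} at $\alpha=1$, together with the first line of \eqref{eq:tilt-before-return}, give
\[
\textstyle\sum_jp_j(U)\eps^{U,j}\leq1-\Tr\bigl[\rho^U_{\mathsf J\M\B}\,\tfrac{\rho^U_{\mathsf J\M\B}}{D_U}\bigr].
\]
\item I would pull this back to $\R\B$ through the isometry $L_U$. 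The required claim is
\[
\Tr\bigl[L\rho L^\dagger\,\tfrac{L\rho L^\dagger}{D}\bigr]\ \geq\ \Tr\bigl[\rho\,\tfrac{\rho}{L^\dagger DL}\bigr]\qquad\text{whenever }D\geq L\rho L^\dagger .
\]
For the symmetric quotient and for the logarithmic derivative, $Y\mapsto\Tr[Y\,(Y/X)]$ is a quantum $\chi^2$-type divergence. Both are jointly convex and monotone under positive trace-preserving compressions (Lieb concavity, respectively the integral representation of $\mathrm D\ln$). Here the compression is $\Ad_{L^\dagger}$, and $L^\dagger(\cdot)L$ acts on $L\rho L^\dagger$ as the identity.
\item This yields $1-\Tr[\rho_{\R\B}\,\rho_{\R\B}/X_U]$ with $X_U:=L_U^\dagger D_UL_U=\frac{|\R|^2}{d}\sum_jw_j\,UV_jV_j^\dagger U^\dagger\otimes\Tr_\M[\cdots]$.
\item Convexity of $X\mapsto\Tr[\rho\,\rho/X]$ moves $\Exp_U$ into the denominator.
\item Formula \eqref{eq:mean} at $\alpha=1$ identifies $\Exp_UX_U=\rho_{\R\B}+\kappa\,(|\R|I_\R\otimes\rho_\B-\rho_{\R\B})$, where $\kappa=\frac{d|\R|-1}{|\R|^2-1}$.
\end{enumerate}
Writing $1-\Tr[\rho\,\rho/X]=\Tr[\rho\,(X-\rho)/X]$ on the support, by linearity of the quotient and $X/X$ being the support projection, gives \eqref{eq:haar-quotient}.

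\emph{From the quotient to the wedge.} The passage to \eqref{eq:haar-wedge} is the author's trace inequality \cite{ChengSimple}: for $A,B\geq0$, $\Tr[A\,\frac{B}{A+B}]\leq\Tr[A\wedge B]$, for either quotient. I would invoke it with $A=\rho_{\R\B}$ and $B=\kappa(|\R|I\otimes\rho_\B-\rho_{\R\B})$; the latter is positive by the dimension bound.

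\emph{Large targets and the universal bound.} For $d\geq|\R|$ I would take the POVM \eqref{eq:large-target-povm} at $\alpha=1$. The error is exactly $\Tr[\rho\,(dI\otimes\rho_\B-\rho)/(dI\otimes\rho_\B)]$ after restricting to the embedded reference, which is \eqref{eq:large-quotient}; the same trace inequality gives \eqref{eq:large-wedge}. For \eqref{eq:universal-wedge} I would use that $B\mapsto\Tr[A\wedge B]$ is monotone, by \eqref{eq:eq:minimum_test}.
\begin{itemize}
\item For $d\geq|\R|$, drop the subtracted $\rho_{\R\B}$.
\item For $2\leq d\leq|\R|$, use $\kappa\,|\R|\leq d$ and then select a code from the ensemble.
\item For $d=1$, the error is zero.
\end{itemize}

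\emph{Main obstacle.} I expect the hard step to be (ii) and (iv) together: monotonicity of $Y\mapsto\Tr[Y\,(Y/X)]$ under the compression $L_U^\dagger(\cdot)L_U$, and convexity in the denominator, for both quotient conventions. If a clean data-processing citation fails for the symmetric quotient, I would argue directly. From $D\geq L L^\dagger D L L^\dagger$ restricted suitably, operator monotonicity of $x^{-1}$ (respectively of $\mathrm D\ln$) handles the compression. For the Haar average, I would fall back on joint convexity of $(Y,X)\mapsto\Tr[Y X^{-1/2}YX^{-1/2}]$, which follows from Lieb's concavity theorem, and its logarithmic-derivative analogue.
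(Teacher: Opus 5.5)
Your argument is correct, but it is organized differently from the paper's. The ensemble in the statement is only specified in the proof, so proving the bounds for your $\{V_j\}$ ensemble of Theorem~\ref{thm:coding} at $\alpha=1$ establishes the same existence claim.

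\textbf{How the paper proceeds.} The paper uses a single isometry $V:\M\to\R$ with Haar-random $U$ and encoding density $p(U)=\frac{|\R|}{d}\Tr[V^\dagger U^\dagger\phi_\R UV]$. It writes the average success exactly as $\frac{|\R|}{d^2}\Exp_U\|\rho_{\R\B}^{1/2}(UVV^\dagger U^\dagger\otimes\omega_U^{-1/2})\rho_{\R\B}^{1/2}\|_2^2$, where $\omega_U=\Tr_\M[V^\dagger U^\dagger\rho_{\R\B}UV]$. It then applies two elementary inequalities: $\Exp\|\cdot\|_2^2\geq\|\Exp\,\cdot\,\|_2^2$, and operator Jensen for $u\mapsto u^{-1/2}$ (or $u^{-1}$ under the $\mathrm D\ln$ integral) with respect to the unital map $(X_U)_U\mapsto\frac{|\R|}{d}\Exp_U[UVX_UV^\dagger U^\dagger]$. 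That single unital map performs your compression step (ii) and your averaging step (iv) at once. It needs no Cauchy--Schwarz block comparison and no appeal to Lieb's theorem or to data processing for monotone metrics.

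\textbf{What each route buys.} Your route gives a preparation ensemble shared with Theorem~\ref{thm:coding}, whose average input is $\phi_\A$ for every fixed $U$. The paper's route uses only elementary tools. Both reach the same averaged denominator through the twirling identity \eqref{eq:projector-moment}. The wedge, large-target, and universal steps then coincide.

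\textbf{Two points of rigor.} First, the compression $X\mapsto L_U^\dagger XL_U$ is only trace-nonincreasing, so data processing must be applied to a CPTP completion. For example, add the branch $\Tr_{\mathsf J\M}[((I-L_UL_U^\dagger)\otimes I_\B)\,\cdot\,]$; it contributes nothing because $L_U\rho_{\R\B}L_U^\dagger$ lies in the range of $L_UL_U^\dagger$.

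Second, your fallback based on $D\geq LL^\dagger DLL^\dagger$ does not work, since compression by a projection is not order-decreasing in general. For instance, $D=\ket{+}\bra{+}$ and $P=\ket0\bra0$ give $D-PDP\not\geq0$. The correct elementary replacement is the paper's tool. Operator Jensen for the isometry $L$ gives $L^\dagger D^{-1/2}L\geq(L^\dagger DL)^{-1/2}$, after regularizing singular $D$ by $\delta I$. In the logarithmic-derivative case it gives $L^\dagger(D+t)^{-1}L\geq(L^\dagger DL+t)^{-1}$ for each $t>0$. Combine this with $\Tr[\rho A\rho A]\leq\Tr[\rho A'\rho A']$ for $0\leq A\leq A'$, which holds since $\Tr[\rho X\rho Y]=\|X^{1/2}\rho Y^{1/2}\|_2^2\geq0$ for $X,Y\geq0$. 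Likewise, your step (iv) reduces to squared-norm Jensen plus operator convexity of $u^{-1/2}$ or $u^{-1}$, so the Lieb-type citations can be dropped entirely.
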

\begin{proof}
	For $2\leq d\leq|\R|$, fix an isometry $V:\M\to\R$ and let
	$\mathrm dU$ be normalized Haar measure on the unitaries of $\R$.
	Choose the encoding
	\begin{equation}
		p(U)=\frac{|\R|}{d}
		\Tr[V^\dagger U^\dagger\phi_\R UV],
		\qquad
		\ket{\psi^U}_{\M\A}
		=
		\frac{(V^\dagger U^\dagger\otimes I_\A)\ket\phi_{\R\A}}
		{\sqrt{\Tr[V^\dagger U^\dagger\phi_\R UV]}}.
		\label{eq:haar-encoding}
	\end{equation}
%	Zero-density preparations are omitted.
	Here, $p(U)$ is a probability density relative to $\mathrm dU$.
	The first Haar moment
	\[
	\Exp_U[UVV^\dagger U^\dagger]
	=\frac d{|\R|}I_\R
	\]
	gives $\Exp_U[p(U)]=1$ and
	$\Exp_U[p(U)\psi_\A^U]=\phi_\A$.
	
	For each encoding, use the Weyl discrimination POVM whose equivalent test is
	\begin{equation}
		T_{\M\B}^U
		=
		\frac{V^\dagger U^\dagger\rho_{\R\B}UV}
		{dI_\M\otimes
			\Tr_\M[V^\dagger U^\dagger\rho_{\R\B}UV]}.
		\label{eq:haar-petz-effect}
	\end{equation}
	Isometries and unitaries act trivially on $\B$.
	Complete every Weyl effect uniformly on the unused marginal kernel,
	which has zero overlap with the output.
	Theorem~\ref{thm:duality} supplies the corresponding decoder.
	
	We first average the success probability.
	For the symmetric quotient, squared-norm Jensen and Jensen's
	operator inequality for $u\mapsto u^{-1/2}$ \cite{HP} give
	\begin{align}
		1-\avgeps(d)
		&=\frac{|\R|}{d^2}\Exp_U
		\left\|\rho_{\R\B}^{1/2}
		\left[
		UVV^\dagger U^\dagger\otimes
		\left(\Tr_\M[V^\dagger U^\dagger\rho_{\R\B}UV]\right)^{-1/2}
		\right]\rho_{\R\B}^{1/2}\right\|_2^2
		\notag\\
		&\geq
		\Tr\!\left[\rho_{\R\B}
		\frac{\rho_{\R\B}}
		{\displaystyle
			\frac{|\R|^2}{d}\Exp_U\!\left[
			UVV^\dagger U^\dagger\otimes
			\Tr_\M[V^\dagger U^\dagger\rho_{\R\B}UV]
			\right]}
		\right].
		\label{eq:binary-averaged-success}
	\end{align}
	Indeed, the first Haar moment makes the normalized averaging map
	\[
	(X_U)_U\longmapsto
	\frac{|\R|}{d}\Exp_U[UVX_UV^\dagger U^\dagger]
	\]
	unital, so operator Jensen applies.
	After sandwiching by $\rho_{\R\B}^{1/2}$, we also use that
	$\Tr [H^2]$ is increasing in the operator order for $H\geq0$.
	
	For the logarithmic-derivative quotient, its integral representation
	gives
	\[
	1-\avgeps(d)
	=
	\frac{|\R|}{d^2}\int_0^\infty\Exp_U
	\left\|\rho_{\R\B}^{1/2}
	\left[
	UVV^\dagger U^\dagger\otimes
	\left(\Tr_\M[V^\dagger U^\dagger\rho_{\R\B}UV]
	+tI_\B\right)^{-1}
	\right]\rho_{\R\B}^{1/2}\right\|_2^2\,\mathrm dt.
	\]
	Applying the same two Jensen inequalities, now for $u\mapsto u^{-1}$,
	and rescaling the integration variable by $|\R|$ proves
	\eqref{eq:binary-averaged-success} for this quotient as well.
	For singular marginals, first add $\delta I_\B$ before applying
	Jensen and then let $\delta\downarrow0$ after sandwiching by the
	source. The resulting limits follow by monotone convergence.
	
	The twirling identity \eqref{eq:projector-moment} in
	Appendix~\ref{app:moment} evaluates the averaged denominator:
	\begin{align}
		\frac{|\R|^2}{d}\Exp_U\!\left[
		UVV^\dagger U^\dagger\otimes
		\Tr_\M[V^\dagger U^\dagger\rho_{\R\B}UV]\right]
		=
		\rho_{\R\B}
		+\frac{d|\R|-1}{|\R|^2-1}
		\left(|\R|I_\R\otimes\rho_\B-\rho_{\R\B}\right).
		\label{eq:binary-averaged-denominator}
	\end{align}
	This operator dominates $\rho_{\R\B}$ by the dimension bound.
	Since either quotient is linear in its numerator and the quotient
	of an operator with itself is its support projection, subtracting
	\eqref{eq:binary-averaged-success} from one proves
	\eqref{eq:haar-quotient}.
	
	The binary-testing inequality
	\cite[Eq.~(A1)]{ChengSimple}, \cite[Eqs.~(68)--(69)]{CL},
	\begin{equation}
		\Tr\!\left[A\frac{B}{A+B}\right]
		\leq\Tr[A\wedge B],
		\qquad A,B\geq0,
		\label{eq:binary-quotient}
	\end{equation}
	then gives \eqref{eq:haar-wedge}.
	
	For $d\geq|\R|$, substituting $\alpha=1$ into the direct decoder
	\eqref{eq:large-target-povm} gives equality in
	\eqref{eq:large-quotient}. Its numerator is positive by the
	dimension bound, so \eqref{eq:binary-quotient} also proves
	\eqref{eq:large-wedge}.
	
	Finally, the trace minimum is increasing
	in each argument by \eqref{eq:eq:minimum_test}. Dropping the subtractive term and using
	$|\R|(d|\R|-1)/(|\R|^2-1)\leq d$ for $d\leq|\R|$
	gives \eqref{eq:universal-wedge}.
	At least one preparation and its decoder attain the ensemble
	bound; fix this realization before communication.
	The case $d=1$ has zero error.
\end{proof}

\begin{remark}[Entanglement transmission and second-order achievability]
	Beigi, Datta, and Leditzky \cite{BDL} established one-shot and
	second-order achievability bounds for entanglement transmission
	using random code subspaces and Petz recovery.
	Can and Wilde \cite[Lemma~4.4]{CW} showed that an
	entanglement-generation code can be converted into an
	entanglement-transmission code at the same infidelity, with only
	a constant loss in the logarithm of the target dimension at
	fixed error. 
	Applying this conversion to the codes obtained from
	\eqref{eq:universal-wedge} and the standard second-order expansion \cite{LiSecondOrder, TomamichelHayashi2013}, yields the same
	achievable second-order rate, for product
	sources over a memoryless channel
	\cite[Theorem~6.3]{CW}.
\end{remark}

\begin{remark}[A strengthened hypothesis-testing bound]
For $A,B\geq0$ with $\Tr[A+B]>0$, the stronger trace inequality is
\begin{equation}
 \Tr\!\left[A\frac{B}{A+B}\right]
 \leq\Tr[A\wedge B]
 -\frac{\bigl(\Tr[A\wedge B]\bigr)^2}{\Tr[A+B]}.
 \label{eq:stronger-binary-quotient}
\end{equation}
The symmetric-quotient form is \cite[(A2)]{ChengSimple}.
For the logarithmic-derivative quotient, the same strengthened bound follows from the proof of \cite[Lemma~4, Appendix~B]{BeigiTomamichel24}.
For the two operators used above, $\Tr[A+B]=d|\R|$. Thus \eqref{eq:haar-wedge} strengthens to
\begin{align}
 \avgeps(d)
 &\leq\Tr\!\left[\rho_{\R\B}\wedge
 \frac{d|\R|-1}{|\R|^2-1}
 (|\R|I_\R\otimes\rho_\B-\rho_{\R\B})\right]\notag\\
 &\quad-\frac1{d|\R|}
 \left(\Tr\!\left[\rho_{\R\B}\wedge
 \frac{d|\R|-1}{|\R|^2-1}
 (|\R|I_\R\otimes\rho_\B-\rho_{\R\B})\right]\right)^2.
 \label{eq:strengthened-haar-wedge}
\end{align}
For $d\geq|\R|$, the same improvement holds with
$dI_\R\otimes\rho_\B-\rho_{\R\B}$ as the second argument of each trace minimum; its squared term has the same denominator $d|\R|$.
\end{remark}

\section{Conclusions}
\label{sec:consequences}

In this paper, we established one-shot coding bounds for entanglement generation through state discrimination. The approach provides explicit decoders built from Petz recovery maps \cite{BDL,CW}, whereas the decoupling construction of \cite{HHWY} establishes the existence of a decoder through Uhlmann's theorem. It extends the decoding error analysis developed for classical communication over quantum channels \cite{ChengSimple,CL} to quantum communication. Both constructions use random coding to establish the existence of a suitable fixed code. Although our analysis uses Haar-random unitaries, the same bounds hold for any exact unitary $2$-design, since only the first two moments enter the proof \cite{HHWY,DCEL}.

When both errors are expressed as squared purified distance, the resulting achievable random-coding exponent agrees with that previously obtained through decoupling in \cite{BCY}. Our approach thereby provides an alternative derivation of this exponent through state discrimination and an explicit analysis of the decoding error.

\appendix
\section{Auxiliary lemmas}
\label{app:auxiliary}

\subsection{Tilting inequality}
\label{app:tilting}

We employ the following sharp tilting inequality for either quotient.

\begin{samepage}
\begin{lemm}[Sharp tilting inequality \cite{CL_sharptilting}]
\label{lemm:tilting}
For either convention in \eqref{eq:quotient}, $\alpha\in[\sfrac12,1)$, and positive semidefinite operators $A$ and $B$,
\begin{equation}
 \Tr\!\left[A\frac{B^\alpha}{A^\alpha + B^\alpha}\right]
 \leq c_\alpha\Tr\!\left[A^\alpha B^{1-\alpha}\right],
 \label{eq:tilting}
\end{equation}
where
\begin{equation}
 c_\alpha=
 \left(\frac{1-\alpha}{\alpha}\right)^{\frac{1-\alpha}{\alpha}}
 \left(\frac{2\alpha-1}{\alpha}\right)^{\frac{2\alpha-1}{\alpha}} \in \left[\sfrac12,1\right].
 \label{eq:constant}
\end{equation}
The endpoint is defined by continuity, i.e.~$c_{\sfrac12}=c_1 = 1$.
\end{lemm}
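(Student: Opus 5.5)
The plan is to reduce \eqref{eq:tilting} to a scalar weighted AM--GM inequality, after a change of variables that makes $c_\alpha$ appear as the optimal constant. Put $X=A^\alpha$, $Y=B^\alpha$, $S=X+Y$ and $p=(1-\alpha)/\alpha\in(0,1]$, so that $1-p=(2\alpha-1)/\alpha$. The inequality then reads
\[
 \Tr\!\left[X^{1+p}\,\frac{Y}{S}\right]\le c_\alpha\,\Tr\!\left[X\,Y^{p}\right].
\]
In the commuting case, with eigenvalues $x,y$, it is equivalent to $x^{p}y^{1-p}\le c_\alpha(x+y)$. Minimizing the weighted AM--GM bound $x^py^{1-p}\le p\lambda^{-(1-p)}x+(1-p)\lambda^{p}y$ over $\lambda>0$ gives exactly $c_\alpha=p^p(1-p)^{1-p}$. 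This confirms the constant and shows it is sharp, consistent with Section~\ref{sec:tightness}. The whole task is therefore to transport this scalar inequality through the noncommutative quotient.

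For the symmetric quotient I would work in the eigenbasis of $S$. There $Y/S=S^{-1/2}YS^{-1/2}$, and one can use $Y/S=\Pi_S-X/S$, where $\Pi_S$ is the support projection of $S$. This rewrites the left side as
\[
 \Tr[X^{1+p}\Pi_S]-\Tr\!\left[X^{1+p}S^{-1/2}XS^{-1/2}\right].
\]
To bound the second trace from below, I would use the operator Cauchy--Schwarz or Jensen inequality for the unital map $Z\mapsto S^{-1/2}X^{1/2}ZX^{1/2}S^{-1/2}$, restricted to $\supp S$ and using $X\le S$. For the logarithmic-derivative quotient, I would use its integral representation $\int_0^\infty (S+t)^{-1}Y(S+t)^{-1}\,\mathrm dt$ and treat each resolvent term the same way, with $u\mapsto u^{-1}$ in place of $u\mapsto u^{-1/2}$. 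This mirrors how the proof of Theorem~\ref{thm:binary} handles the two conventions in parallel.

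The noncommutative core I would aim for is the operator form of the optimized AM--GM. For the weighted geometric mean $\#_{1-p}$ and every $\lambda>0$,
\[
 X\#_{1-p}\,Y\le p\lambda^{-(1-p)}X+(1-p)\lambda^{p}Y ,
\]
which is the Ando operator AM--GM. Comparing the two sides of the tilting inequality, this should give a pointwise operator bound of the form
\[
 X^{p/2}\,\frac{Y}{S}\,X^{p/2}\lesssim c_\alpha\,(\text{something whose pairing with }X\text{ is }Y^p).
\]
To pass from the geometric mean to $\Tr[XY^p]$, I would use the Araki--Lieb--Thirring inequality or the Lieb concavity theorem, since $p\in(0,1]$ lies exactly in the range where $Y\mapsto Y^p$ is operator concave. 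As a fallback, I would use the Nussbaum--Szko{\l}a classical reduction: express both traces as expectations over pairs of eigenvalues of $X$ and $S$ (respectively $X$ and $Y$) and apply the scalar inequality termwise.

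The main obstacle is the mismatch of eigenbases. The quotient lives in the eigenbasis of $S=X+Y$, while the right side pairs $X$ with $Y^p$, so a naive expansion $\sum_{i,j}$ yields cross terms that are not positive. I would first try to handle this by writing $X^{1+p}=X^{1/2}X^{p}X^{1/2}$ and applying Jensen to $X^{p}$ relative to the contraction $S^{-1/2}X^{1/2}$. This makes the $S$-dependence cancel against $X\le S$ and leaves a trace that operator monotonicity can compare with $Y^p$. Keeping the constant exactly $c_\alpha$, rather than a lossy $1$, is the delicate part: the $\lambda$-optimization must happen inside the operator inequality, not after taking traces.
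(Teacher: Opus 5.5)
\paragraph{There is a genuine gap.} Your reduction is correct as far as it goes. With $X=A^\alpha$, $Y=B^\alpha$, $S=X+Y$ and $p=(1-\alpha)/\alpha$, the commuting case of \eqref{eq:tilting} is $x^py^{1-p}\le c_\alpha(x+y)$. The constant $c_\alpha=p^p(1-p)^{1-p}$ is optimal, with equality at $x/s=p$. But this only shows that $c_\alpha$ cannot be improved. The lemma is the claim that the scalar bound survives noncommutativity, and your proposal never supplies that step: it lists candidate tools and ends by calling the eigenbasis mismatch the main obstacle. The paper does not prove the lemma either. It imports it from \cite{CL_sharptilting}, and notes that for the logarithmic-derivative quotient the operator layer-cake argument of \cite{CL} gives only a coefficient below $1.102$. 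A useful test for any candidate chain of inequalities follows from the saturation of the scalar bound at $Y=\frac{1-p}{p}X$. Every step of a proof with constant $c_\alpha$ must be an equality for commuting $X,Y$ in that ratio.

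\paragraph{Why the suggested steps fail.} They either fail this test or do not apply.
\begin{itemize}
\item The map $Z\mapsto S^{-1/2}X^{1/2}ZX^{1/2}S^{-1/2}$ is not unital. It sends $I$ to $S^{-1/2}XS^{-1/2}=\Pi_S-S^{-1/2}YS^{-1/2}$. It is congruence by the contraction $C=X^{1/2}S^{-1/2}$, and $\|C\|^2=x/s=p<1$ at the extremal ratio. Jensen--Hansen inequalities for such a map are strict for commuting operators and nonlinear powers.
\item Your last suggestion, $S^{-1/2}X^{1+p}S^{-1/2}=C^\dagger X^pC\le(C^\dagger XC)^p$, bounds the left side by $\Tr[(S^{-1/2}X^2S^{-1/2})^pY]$. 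Its scalar form $x^{2p}y/s^p$ exceeds $x^{1+p}y/s$ by the factor $(s/x)^{1-p}$. So for $p\in(0,1)$ it exceeds $c_\alpha xy^p$ at $x/s=p$. For $p<1/2$, its ratio to $xy^p$ is even unbounded as $x\to0$.
\item Ando's inequality $X\#_{1-p}Y\le c_\alpha S$ is correct and tight, but it cannot be substituted into the denominator. Neither quotient is order-reversing in $S$ when paired with two different operators $X^{1+p}$ and $Y$; only the diagonal forms $\Tr[Z\,\tfrac{Z}{S}]$ are antitone. You also never specify the operator ``whose pairing with $X$ is $Y^p$''.
\item As you yourself observe, the left side has no Nussbaum--Szko{\l}a representation. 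In the eigenbasis of $S$ it contains the products $(X^{1+p})_{kj}Y_{jk}$, which have indefinite sign.
\end{itemize}

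\paragraph{What is missing.} You need an operator argument that is exact in the commuting case and links the quotient, which is diagonal in the eigenbasis of $S$, to the pairing $\Tr[XY^p]$. That argument is the content of \cite{CL_sharptilting}. As written, your proposal establishes the sharpness of $c_\alpha$, not the inequality.
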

\end{samepage}

For the logarithmic-derivative quotient, the inequality proved in \cite[Proposition~2.3]{CL} via the operator layer cake theorem has the same form with a coefficient less than $1.102$.

\subsection{The Haar twirling identity}
\label{app:moment}

Let $P_\R$ be a Haar-random rank-$d$ projection on $\R$, with $|\R|\geq2$. For every operator $X_{\R\B}$,
\begin{align}
\Exp\!\left[P_\R\otimes\Tr_\R[(P_\R\otimes I_\B)X_{\R\B}]\right]
 =
 \frac{d(d|\R|-1)}{|\R|(|\R|^2-1)}
 I_\R\otimes\Tr_\R X_{\R\B}
 +\frac{d(|\R|-d)}{|\R|(|\R|^2-1)}X_{\R\B}.
 \label{eq:projector-moment}
\end{align}
This is the standard random-subspace twirling identity; see \cite[Lemma 2.2]{HHWY}, after multiplying the twirled restricted swap by the swap and contracting with $X_{\R\B}$. Applying it to each $UV_jV_j^\dagger U^\dagger$, then summing the weights, proves \eqref{eq:mean}. Applying it to the single projection $UVV^\dagger U^\dagger$ gives the denominator in Section~\ref{sec:small-deviations}.

\section{Petz recovery and the decoding measurement}
\label{app:petz}

The two quotients in \eqref{eq:quotient} give the ordinary and averaged rotated Petz recovery measurements for the same auxiliary Weyl ensemble. Theorem~\ref{thm:duality} converts these measurements into the decoding channels used in the paper.

Fix an encoding $(U,j)$ of positive probability and set
\begin{equation}
 \omega_{\M\B}
 =\frac{V_j^\dagger U^\dagger\rho_{\R\B}^{\alpha}UV_j}
 {\Tr[V_j^\dagger U^\dagger\rho_{\R\B}^{\alpha}UV_j]},
 \qquad \omega_\B=\Tr_\M\omega_{\M\B}.
 \label{eq:petz-auxiliary}
\end{equation}
For the single-isometry construction omit $j$, and for the direct construction use the normalized $\rho_{\M\B}^{\alpha}$. Scalar normalization cancels from both quotients. Let $\mathsf X$ record the $d^2$ Weyl labels. The classical--quantum preparation channel with uniform reference state is
\begin{align}
 \mathscr E(\xi_{\mathsf X})
 &=\sum_{z=1}^{d^2}\bra z\xi_{\mathsf X}\ket z\,
 (W_z\otimes I_\B)\omega_{\M\B}(W_z^\dagger\otimes I_\B),
 \qquad \sigma_{\mathsf X}=\frac{I_{\mathsf X}}{d^2},
 \label{eq:weyl-preparation}\\
 \mathscr E(\sigma_{\mathsf X})
 &=\frac{I_\M}{d}\otimes\omega_\B.
 \label{eq:weyl-average}
\end{align}
The last equality is the Weyl twirl.

For a channel $\mathscr E$ and reference state $\sigma$, the rotated Petz map is
\begin{equation}
 \mathcal R_{\sigma,\mathscr E}^{[t]}(\xi)
 =\sigma^{\frac{1+it}{2}}\mathscr E^*\!\left(
   \mathscr E(\sigma)^{-\frac{1+it}{2}}\xi
   \mathscr E(\sigma)^{-\frac{1-it}{2}}
   \right)\sigma^{\frac{1-it}{2}},
 \qquad t\in\mathbb R.
 \label{eq:rotated-petz}
\end{equation}
The ordinary Petz map is $\mathcal R^{[0]}_{\sigma,\mathscr E}$. Its averaged rotated version is $\int_{\mathbb R}\mathcal R^{[t]}_{\sigma,\mathscr E}\,\mathrm d\mu(t)$, where the averaging measure is
$\mathrm d\mu(t)=\pi\,\mathrm dt/[2(\cosh(\pi t)+1)]$ \cite[Corollary 4.2]{SBT}; see also \cite{JRSWW}.
For \eqref{eq:weyl-preparation}, the uniform prior makes the outer rotations cancel. Substitution into \eqref{eq:rotated-petz} yields the measurement channel
\begin{align}
 \mathcal R_{\sigma_{\mathsf X},\mathscr E}^{[t]}(\xi_{\M\B})
 &=\sum_{z=1}^{d^2}\Tr[\xi_{\M\B}\Pi_z^{[t]}]\ket z\bra z,
 \notag\\
 \Pi_z^{[t]}
 &=\frac1d
 (W_z\otimes\omega_\B^{-\frac{1-it}{2}})\omega_{\M\B}
 (W_z^\dagger\otimes\omega_\B^{-\frac{1+it}{2}}).
 \label{eq:rotated-effects}
\end{align}
At $t=0$, these are exactly the symmetric-quotient effects. For the average, the identity \cite[Lemma 3.4]{SBT}
\begin{equation}
 \mathrm D\ln[X](Y)
 =\int_{\mathbb R}
 X^{-\frac{1-\mathrm{i}t}{2}}Y X^{-\frac{1+\mathrm{i}t}{2}}\,\mathrm d\mu(t)
 \label{eq:log-rotation-identity}
\end{equation}
gives
\begin{equation}
 \int_{\mathbb R}\Pi_z^{[t]}\,\mathrm d\mu(t)
 =\mathrm D\ln[{dI_\M\otimes\omega_\B}]\!\left(
 (W_z\otimes I_\B)\omega_{\M\B}(W_z^\dagger\otimes I_\B)\right),
 \label{eq:averaged-effects}
\end{equation}
which is exactly the logarithmic-derivative measurement.

Finally, \eqref{eq:decoder-action} gives the corresponding
decoding channels explicitly:
\begin{align}
 \D_0(\xi_\B)
 &=\left(\Tr_\B\!\left[
 \omega_{\M\B}
 (I_\M\otimes\omega_\B^{-\sfrac12}\xi_\B\omega_\B^{-\sfrac12})
 \right]\right)^{\mathsf T},
 \label{eq:petz-decoder-explicit}\\
 \D_{\rm int}(\xi_\B)
 &=\left(\Tr_\B\!\left[
 \omega_{\M\B}(I_\M\otimes\mathrm D\ln[\omega_\B](\xi_\B))
 \right]\right)^{\mathsf T}\notag\\
 &=\int_{\mathbb R}
 \D_0\!\left(\omega_\B^{-\mathrm{i}t/2}\xi_\B\omega_\B^{\mathrm{i}t/2}\right)
 \,\mathrm d\mu(t).
 \label{eq:integral-decoder-explicit}
\end{align}
Thus the integral decoder corresponds exactly to the averaged rotated Petz map for the preparation channel \eqref{eq:weyl-preparation}. All powers act on the relevant supports. The channel completions send the orthogonal input subspace to $I_{\Mh}/d$ and discard cross terms, as prescribed by the uniform POVM completion in the main text.

\urlstyle{same}

\end{document}